\definecolor{linkcolour}{rgb}{0.2,0.2,0.6}
\newtheorem{theorem}{Theorem}[section]
\newtheorem{lemma}{Lemma}[section]
\newtheorem{remark}{Remark}[section]
\theoremstyle{definition}
\newtheorem{definition}{Definition}[section]
\newtheorem{assumption}{Assumption}[section]
\title{Improving control over unobservables with network data}
\author{Vincent Starck\thanks{I am very grateful to Susanne M. Schennach for her helpful comments and continuous support throughout this project. I also thank Peter Hull, Toru Kitagawa, Soonwoo Kwon, Jonathan Roth, and Daniel Wilhelm for their valuable advice and numerous suggestions. This research has also benefited from the input of seminar participants at Brown, Michigan State, University of Pennsylvania, University of Alberta, University of Melbourne, University of Sydney, LMU Munich, and University of Amsterdam, as well as participants to the Microeconometrics class of 2022-2023 conference at Duke and the Encounters in Econometrics conference at Oxford.}\thanks{%
V.Starck@lmu.de}  \\ LMU Munich}
\date{\today}
\newcommand{\sumin}{\sum_{i=1}^n} 
\newcommand{\sumkn}{\sum_{k=1}^n}
\newcommand{\sumjn}{\sum_{j=1}^n}
\newcommand{\meanin}{\frac{1}{n} \sum_{i=1}^n}
\newcommand\equaldef{\mathrel{\overset{\makebox[0pt]{\mbox{\normalfont\tiny\sffamily def}}}{=}}}
\DeclareMathOperator{\CATE}{CATE}
\DeclareMathOperator{\ATE}{ATE}
\newcommand{\indep}{\perp \!\!\! \perp}
\begin{document}\maketitle

\begin{abstract}

This paper develops a method to conduct causal inference in the presence of unobserved confounders by leveraging networks with homophily, a frequently observed tendency to form edges with similar nodes. I introduce a concept of \textit{asymptotic homophily}, according to which individuals' selectivity scales with the size of the potential connection pool. The resulting network formation model accommodates common empirical features such as homophily, degree heterogeneity, sparsity, and clustering, and provides a framework to obtain consistent estimators of treatment effects that are robust to selection on unobservables. In an application, I recover an estimate of the effect of parental involvement on students' test scores that is greater than that of OLS, arguably due to the estimator's ability to account for unobserved ability. \medskip

\noindent{\bf Keywords}: Causal Inference, Networks, Selection on unobservables, Homophily.
\end{abstract}

\section{Introduction}

Estimating the effect of a treatment is a frequent goal in economics and social sciences. A common challenge is the presence of unobserved confounders that threaten the validity of the unconfoundedness assumption, which is typically necessary to perform inference with standard methods. Tools that strengthen control over variables that affect outcomes but are hard to measure, such as ability, culture, work ethic, tastes, \textit{etc.}, are thus particularly valuable. \medskip

Recently, networks and datasets with spatial structure have become increasingly available to researchers, providing new avenues for research. Homophily or assortative matching is a ubiquitous feature of empirical networks: nodes tend to associate with similar nodes \citep{lazarsfeld1954friendship, clark1992friendship, case1993budget, mcpherson2001birds, moody2001race, currarini2009economic, boucher2017my, dzemski2019empirical}. As \citet{zeleneev2020identification} note, homophily is likely to also operate through unobserved factors. An example is the tendency for people to form friendship ties based on ability \citep{clark1992friendship, burgess2011school, boutwell2017general}, a variable that is typically unavailable to the researcher. \medskip

Homophily then generates opportunities to create unobservable-adjusted comparison groups. For instance, if we are interested in the effect of parental involvement on student test scores, we may be concerned about unobserved confounding from differences in student ability. However, if students of similar ability are likely to be friends \citep{clark1992friendship, burgess2011school, boutwell2017general}, the omitted variable bias can be reduced by comparing connected students. \bigskip

The paper develops the idea that homophilic networks can be exploited to derive consistent estimators of treatment effects in the presence of unobserved confounders. This is formally done under two main frameworks: either the network is sparse and homophily captures the essence of link formation, or the network is dense and features homophily at least in the unobservables. \medskip

In the former case, I let the probability of link formation vary with the size of the network: people become pickier to limit the number of connections or improve their average quality as the network expands. This is consistent with the common view that the average degree should not increase proportionally to the size of the network and that most networks are sparse. As people are able to form increasingly better matches with a larger pool of potential neighbors, they become more selective because of decreasing benefits per additional match, preference for quality of matches, or limited resources to devote to additional connections. Improved match quality has been documented in a few instances, \textit{e.g.}, \citet{dauth2022matching} analyze worker sorting and find evidence of stronger assortative matching in larger cities. \medskip 

This modeling strategy accomplishes two things. First, the approach provides an asymptotic approximation that does not render the mechanism of network formation negligible in the limit: selectivity scales with the size of the connection pool, in the spirit of drifting sequences. As such, it proposes a network formation model that can accommodate common features of empirical networks such as homophily, sparsity, degree heterogeneity, or clustering. Second, it is sufficient to establish consistency and asymptotic normality for estimators that use $m^{\mbox{th}}$-order connections or people with more than $c$ connections in common as comparison groups. \medskip 

As an extension, I discuss how homophily in dense networks can allow estimation under latent confounding by using comparison groups that consist of people whose observables increasingly differ but nevertheless connect. This hinges on the following intuition: if there is no observed rationale for two people being friends, the reason for their friendship is more likely to lie in the unobserved world. If two people are connected despite their observables indicating that such a link was unlikely, they are more likely to be close in terms of unobservables. By suitably manipulating a discrepancy in observables and letting it grow with sample size, consistent estimators can be constructed. \bigskip

I provide results that allow for the estimation of the Conditional Average Treatment Effect (CATE), which provides a way to describe the heterogeneity of the treatment effect for sampled individuals. The conditional average effect may be the end goal of the analysis (when a specific unit is targeted for treatment or policy) or may be a prelude to aggregation to the Average Treatment Effect (ATE). \medskip

I define a general form of CATE estimator as a function of a group of counterfactual observations to be determined, then propose different choices to deal with different empirical issues. In all cases, estimators isolate increasingly better counterfactuals as to recover the CATE asymptotically. I show that the proposed estimators of the (C)ATE are asymptotically normal, enabling statistical inference. \medskip

Finally, I demonstrate the feasibility and effectiveness of the method through both simulations and an empirical application. In the application, I obtain an estimate of the effect of parental involvement on students' test scores that suggests a greater impact than OLS does, arguably due to the estimator's ability to account for unobserved ability and motivation.

\paragraph{Related literature}

The paper is at the intersection of the literature on networks \citep{jackson2010social, graham2015methods, de2017econometrics, newman2018networks}, in particular those featuring homophilic network formation \citep{boucher2015structural, graham2016homophily, graham2017econometric, demirer2019partial, gao2020nonparametric, mele2022structural}, and estimation of treatment effects \citep{imbens2004nonparametric, imbens2009recent, imbens2015causal}. \medskip

In a related paper, \citet{auerbach2022identification} considers a partially linear outcome regression where the nonlinear term depends on an unobserved variable. Using information from a network whose formation hinges on the unobserved variable, he is able to recover consistent estimates of regression coefficients under general assumptions. \citet{xu2025nonparametric} extends the control function approach to nonparametric outcome equations and derives minimax rates for estimation. See also \citet{goldsmith2013social, hsieh2016social, johnsson2021estimation}, who use related frameworks and provide a way to analyze peer effects. \medskip

Through the help of a pseudo-distance, \citet{zeleneev2020identification} devises a method to identify agents with similar values of latent fixed effects, which allows him to estimate parameters of interest while controlling for unobserved heterogeneity. \citet{demirer2019partial} provides partial identification results in linear models under homophilic behavior and proposes a comprehensive nomenclature for homophily. \medskip

This paper provides causal inference tools to handle latent confounding using information from a sparse network. Specifically, I consider a nonparametric potential outcome setup, but I impose structure on network formation, especially homophily in the unobservables. Leaving the outcome equation unrestricted is critical to deal with the common concerns of treatment effect heterogeneity and nonlinearities. In addition, the method circumvents the need to define and estimate equivalent classes and focuses on the common case of sparse networks, in contrast to previous papers. Finally, homophilic structures allow for the use of higher-order neighbors or friends in common through triangular inequality relationships, which leads to a class of intuitive estimators that are easy to implement. \medskip


\section{Improving control over unobservables using network data}

\subsection{Notation and assumptions}

The sample is a cross-section of $n$ individuals. The treatment status of individual $i$, $T_i \in \{0, 1\}$, and the corresponding outcome, $Y_i = Y_i(T_i)$ with the potential outcome notation \citep{neyman1923application, rubin1974estimating}, are observed. As the notation for the outcome suggests, the Stable Unit Treatment Value Assumption (SUTVA) is maintained throughout. \medskip

The covariates, $X = (X^o, X^u) \in \mathcal{X}^o \times \mathcal{X}^u \equaldef \mathcal{X} \subset \mathds{R}^d$, are divided into observed variables, $X^o$, and unobserved variables, $X^u$. There is a norm $\Vert \cdot \Vert$ on $\mathds{R}^d$ (with some abuse of notation, this will be used to represent the norm on $\mathcal{X}^o$ or $\mathcal{X}^u$), typically Euclidean. I focus on continuously distributed covariates $X$, though discrete variables can be accommodated -- typically under weaker conditions since concerns such as asymptotic bias disappear. To avoid technical difficulties with vanishing denominators, it will be convenient to assume that covariates have a smooth density bounded from below. I make the following assumption throughout the analysis:

\begin{assumption}[Existence of bounded densities] \ \\
The joint distribution of the covariates admits a density $f$ with respect to Lebesgue measure. On the compact $\mathcal{X}$, the density is continuously differentiable and satisfies $f \geq \underline{f}$ for some positive $\underline{f}$. 
\end{assumption}

Draws of $(Y_i, T_i, X_i)$ are i.i.d. and realizations of a random variable are denoted by the corresponding lower-case letter. $B_r(x)$ denotes a ball of radius $r$ centered at $x$. $C$ represents a generic (positive) constant. \medskip 

A network is given through a (binary) weighting/link matrix $W$, of size ($n \times n$). The neighborhood $\mathcal{N}(i)$ refers to the links, friends, or connections of the node or individual $i$, \textit{i.e} $\mathcal{N}(i) \equaldef \{j \in \{1, \ldots, n\}\vert W_{ij} = 1\}$, $\mathcal{N}_t(i)$ denotes neighbors with a specific treatment status $t$, \textit{i.e} $\mathcal{N}_t(i) \equaldef \{j \in \{1, \ldots, n\}\vert W_{ij} = 1, T_j = t\}$. These definitions extend to higher-order neighbors, say of order $m$, which are denoted by $\mathcal{N}_t^m(i)$. Connections in common are given by $\mathcal{N}_t(i;j) \equaldef \mathcal{N}_t(i) \cap \mathcal{N}_t(j)$. \medskip 

The goal is to do inference about treatment effects. In particular, I develop inference methods for the Conditional Average Treatment Effect (CATE), $\CATE(x_i) \equaldef \mathds{E}[Y_i(1)-Y_i(0)\vert X_i=x_i]$, and then for the Average Treatment Effect (ATE), $\ATE \equaldef \mathds{E}[Y_i(1)-Y_i(0)]$. Although I focus on average treatment effects, the insights can be exploited to obtain, \textit{e.g.,} quantiles of treatment effects or the average effect on the treated. The usual statement about omitting ‘almost surely’ qualifiers, in particular pertaining to conditional expectations, applies. \medskip 

The following core assumptions are maintained throughout the paper: 

\begin{assumption}[Causal Inference] \ \\
a) Unconfoundedness: $(Y_i(1), Y_i(0)) \indep T_i \vert X_i$ \\
b) Overlap: $0<C<\mathds{P}[T_i = 1\vert X_i] < 1-C < 1$
\end{assumption}

These two assumptions are ubiquitous in the literature on treatment effects, although this version of unconfoundedness conditions on $X$ instead of $X^o$. It is only assumed that treatment is independent of potential outcomes when conditioned on individual characteristics, including unobserved ones. Since covariates that may influence selection into treatment, such as ability, work ethic, or personal preferences, are typically unobserved, this is often a valuable relaxation: selection on some unobservables is allowed.

\subsection{Network formation}

Let $i, j$ be two individuals and $i \neq j$. I focus on link-formation models of the type 
\begin{equation}\label{network_formation}
    W_{ij} = 1 \iff \eta_{ij} \leq w_n(h(X_i^o; X_j^o) + \Vert X_i^u - X_j^u\Vert)
\end{equation} 
where $w_n: \mathds{R}^+ \rightarrow [0;1]$ is a decreasing function that satisfies $\lim_{x \rightarrow \infty} w_n(x) = 0$. Typically, $w_n$ would decrease with $n$ to accommodate network sparsity, \textit{e.g.}, $w_n(x) = \max\{1-s_n x, 0\}$ or $e^{-  \frac{1}{2} (s_n x)^2}$. The function $h$ is arbitrary and could depend on $n$ as well, and $\eta_{ij}=\eta_{ji}$ are independent uniform\footnote{Since any inverse cumulative distribution function can be applied on both sides to generate any distribution, the uniform assumption is just a normalization.} shocks, drawn independently of $(X_i, X_j, T_i, T_j, Y_i, Y_j)$. The dimensionality of the unobserved variables is arbitrary and matters only for some rates of convergence. \medskip

Dyadic network formation processes are common in the literature\textit{, e.g.}, \citet{graham2017econometric, gao2020nonparametric, zeleneev2020identification, auerbach2022identification, johnsson2021estimation}. Compared to more general specifications (for example, \citet{auerbach2022identification} posits that links are formed whenever $\eta_{ij} \leq w(X_i, X_j)$ and only imposes a weak continuity assumption on $w$), the model (\ref{network_formation}) adds some separability and homophily in the unobservables. \medskip

\begin{remark}
    
There may be another type of unobservable that reflects factors such as popularity or expansiveness. 
Such factors can be accommodated by replacing the right-hand side of (1) with $\tilde{w}_n\left(h(X_i^o; X_j^o) + \Vert X_i^u - X_j^u\Vert, A_i, A_j\right)$, where $A_i$ and $A_j$ are unobserved heterogeneities that represent expansiveness.\footnote{Since links are undirected, $A_i$ and $A_j$ should enter $\tilde{w}$ symmetrically.} \medskip

The model can then be interpreted as extending \citet{graham2017econometric}'s (see also \citet{dzemski2019empirical} in the case of directed links): compared to models where links are formed whenever $X_{ij}'\theta + A_i + A_j$ exceeds (logistic) errors, the model relaxes slightly the functional form and distribution of errors, and introduces unobserved components of $X$ that satisfy homophilic restrictions. The two frameworks coincide if we parametrize $h(X_i, X_j) + \Vert X_i^u - X_j^u\Vert \equiv X_{ij}' \theta$ where $X_{ij}$ is obtained from a transformation of $(X_i, X_j)$ such that the unobserved components are featured homophilically, and we let $\tilde{w}(z, a, b) \equiv \mbox{Logit}(z+a+b)$. \medskip

On the other hand, heterogeneities can be introduced in new ways. A variant that may be of interest specifies the argument of the link function as $\left(\frac{h(X_i^o; X_j^o) + \Vert X_i^u - X_j^u\Vert}{(A_i A_j)^{1/d}}\right)$. Here, expansiveness is featured in a multiplicative way, which may be easier to interpret as a selectivity scale. In the limit of the asymptotic homophily model to be developed in the next Section, it would correspond to a scaling factor for probabilities. A person with characteristics $(x_i, 2 a_i)$ is twice as expansive and has twice the probability of forming a link than a person with characteristics $(x_i, a_i)$ as $n \rightarrow \infty$. \medskip

The analysis can be applied to such functions $\tilde{w}$ under the assumption that expansiveness does not affect the outcome of interest, conditional on $X$. The results presented in the following sections apply directly if those adjustments are made: 
\begin{enumerate}
    \item In the asymptotic framework of Section 2.3, the dependence on $n$ is through the first argument of $\tilde{w}$; in Graham's model, this is interpreted as letting $\theta$ drift at some rate $s_n$: links are formed whenever $X_{ij}' (\theta s_n)$ is lower than $U_{ij} + A_i + A_j$, where $U$ is logistic and the unobserved heterogeneities may be correlated with $X$;
    \item In the dense framework extension in Appendix B, the function $\tilde{w}(z, a, b)$ can be lower bounded by a function $\underline{\tilde{w}}(z)$ that shares the properties required of $w$ in that section; in Graham's model, this is interpreted as imposing bounds on $A_i$.
\end{enumerate}


\end{remark}

Another feature of this network formation is the explicit dependence of $w$ on network size, allowing for sparse networks. This is often the empirically relevant setup since the mean degree of a node is rarely expected to scale with the size of the network \citep{jackson2010social, newman2018networks}. \medskip

The model can be given the usual interpretation of ‘link creation under a mutual positive utility of forming a link’ ($w-\eta$ then reflecting utility; see, \textit{e.g.}, \citet{jackson2010social}), where people derive more utility from interacting with similar individuals, or rationalizes the idea that people with similar characteristics are more likely to meet and thus to form a connection. Nevertheless, since the network is primarily seen as information to draw from, this rationale may not be necessary. For instance, if individuals end up developing similar characteristics after randomly forming connections, a researcher who observes the network after covariates have evolved could use the present framework. In other words, (\ref{network_formation}) need not be the structural equation for network formation, but should approximate the relationship between the links and the covariates relevant to selection at the time of observation. \medskip

The function $h$ may also exhibit homophily,\footnote{In this case in particular, it may make sense to consider variables whose variance has been normalized to put them on the same scale, but the results hold if the norms weight each dimension differently so as to reflect stronger selection in some covariates.} in which case $h(X_i^o; X_j^o) = \Vert X_i^o - X_j^o\Vert$, but other forms may be more appropriate depending on the application. \medskip

A contrasting example is a search for skill complementarity, \textit{e.g.}, for work relationships, inducing non-homophilic selection in a covariate. Such complementarity matching may create issues if the variable on which it operates belongs to the set of controls because, in its extreme form, there is a tendency to match only with increasingly distant individuals, which makes it difficult to handle without functional form restrictions. However, as long as a friendship link does not increasingly rule out closeness in covariates, smoothing on observables allows us to control for the observed components regardless of how they are featured in network formation. \medskip

The following assumption is thus imposed to ensure the possibility of smoothing on the desired observed components.\footnote{An alternative would be to combine the approach with functional form restrictions.} To simplify notation, it is stated as if all variables entering $h$ are intended to be in the control set, though variables that do not enter the outcome equation (even if unobserved) could be featured arbitrarily in $h$. 

\begin{assumption}\label{h_assumption}
    $h$ is continuously differentiable in both of its arguments and, for all $x_i$, there exists $\varepsilon>0$ and $\delta$ such that $f_{X_j^o \vert W_{ij} = 1, X_i = x_i}(y) \geq \varepsilon$ for all large $n$ and $y \in B_\delta(x_i^o)$. 
\end{assumption}

The assumption requires that forming a link does not completely rule out similar covariates. A way to ensure this is to assume people do not strongly penalize their own type: $h(x, x) = h_n(x, x) \rightarrow 0$ as $n$ grows, which includes cases such as $h(x, \tilde{x})=0 \ \forall \tilde{x} \in \mathcal{T}_x \subseteq \mathcal{X}^o$, where $\mathcal{T}_x$ is a list of 'preferred types' which includes $x$ and $h$ is otherwise arbitrary, or when $h(x, \cdot)$ is suitably scaled by a function of $n$ to achieve some preference distribution over characteristics whose support includes $x$. Of course, observables that affect the outcome but do not affect network formation are easily accommodated (they may be included as a special case with an additive component of $h$ which is identically $0$). What is however ruled out is anti-homophilic matching.

\subsection{Asymptotic homophily}\label{section_sh}

\subsubsection{The asymptotic homophily framework}

In what follows, it is assumed that the vector $X$ from the unconfoundedness condition is part of the network formation in a homophilic way to facilitate the exposition. In practice, the binding restriction is that $X^u$ satisfies this restriction since it is possible to smooth over observables (\textit{e.g.}, Remark \ref{kernel_product_remark} below) under Assumption \ref{h_assumption}. \medskip

I explore the case of \textit{asymptotic homophily}, \textit{i.e.}, homophilic behavior is the core mechanism of network formation and individuals' selectivity is tied to the size of their potential matching pool. The improvement in matching quality provides one rationale for sparse networks.\footnote{An extension, discussed in Subsection \ref{section_hiu} in the Appendix, allows for a constant match quality by possibly letting the link formation be independent of sample size; this setup accommodates denser networks.} \medskip

This section provides an asymptotic theory when homophilic behavior is pronounced relative to network size and formalizes the intuition that connections among individuals can be used to form comparison groups. It shows the identifying power of homophilic restrictions and provides a network formation model compatible with many empirically relevant features. \medskip

Suppose that associations are captured by homophily, the probability of a connection is decreasing in $\Vert X_i - X_j\Vert$, and the network is sparse: the average degree of a node is constant or increases slowly. As people are able to form increasingly better matches with a larger pool of potential neighbors, they become more selective because of decreasing benefits per additional match, preference for quality of matches, or limited resources to devote to additional connections. \medskip

To reflect this behavior, the sequence of functions $w_n$ must satisfy two conditions. First, the sequence must be decreasing in order to decrease the probability of forming connections as $n$ rises. Homophily further suggests that people penalize dissimilar individuals increasingly more harshly so that the average match quality (in terms of homophilic preferences) increases. \medskip

The improvement in match quality is consistent with empirical evidence in some applications, such as stronger assortative matching in larger cities \citep{dauth2022matching}. Another interpretation is that asymptotic homophily formalizes the notion that homophilic behavior is pronounced relative to sample size in the spirit of a drifting sequence\footnote{Similarly to \citet{bekker1994alternative}, who analyzes the behavior of IV estimators with many instruments, or \citet{borusyak2022quasi}, who analyze shift-share instruments with a growing number of shocks. ``The sequence is designed to make the asymptotic distribution fit the finite sample distribution better. It is completely irrelevant whether or not further sampling will lead to samples conforming to this sequence'' \citep{bekker1994alternative}.}; the degree to which individuals are selective is, in a sense, preserved as we proceed to an asymptotic approximation. 

Functions of the form $w_n(x) \geq g(s_n x)$ are consistent with such behavior---homophily becomes more prevalent as $n$ rises---irrespective of the exact form of $w_n$ (or $g$). I adopt the following definitions: \medskip 

\begin{definition}[Asymptotic homophily]
a) Network formation is asymptotically homophilic if $W_{ij} = \mathds{1}_{\eta_{ij} \leq w_n(\Vert X_i - X_j\Vert)}$, $w_n(x) \geq g(s_n x)$, where $g: [0; \infty[ \rightarrow [0;1]$ is decreasing and $\lim_{n \rightarrow \infty} s_n = \infty$. \\
b) Network formation is regularly asymptotically homophilic if $W_{ij} = 1$ whenever $w_n(\Vert X_i - X_j\Vert) \geq \eta_{ij}$, $w_n(x) =g(s_n x)$, where $g: [0; \infty[ \rightarrow [0;1]$ is a decreasing function such that $0 < \int_{\mathds{R}^d} g(\Vert y\Vert) \ dy < \infty$, and $\lambda \equaldef \lim_{n \rightarrow \infty} \lambda_n \in \ ]0, \infty]$ where $\lambda_n \equaldef n {s_n}^{-d}$.
\end{definition}

Part a) formalizes the idea of asymptotic homophily; part b) provides regularity conditions for asymptotic results. $\lambda \in ]0, \infty]$ accommodates both the ``exactly sparse'' case when average degrees stay constant ($\lambda \in \mathds{R}^+$), and the slowly increasing average degree regimes ($\lambda =\infty$); the rate condition basically rules out regimes in which average degrees vanish asymptotically. The resulting formation mechanism is consistent with the empirical regularities of social networks \citep{jackson2010social}: sparsity\footnote{It is natural to let the average degree of a node be constant or grow only slowly for most applications. For instance, the average number of friends is typically viewed as constant or slowly increasing as the network expands, requiring the probability of forming a link to decrease with the size of the network. Letting the degree increase, albeit slowly, allows one to take advantage of asymptotic approximations while constant degree is often of interest and considered as the exactly sparse case.}, transitivity/clustering\footnote{Intuitively, clustering occurs because groups of similar individuals tend to form connections. Moreover, as shown in the appendix, the clustering coefficient does not vanish asymptotically, in contrast to Poisson random graphs \citep{erdHos1960evolution} or configuration models \citep{bender1990asymptotic}.}, degree heterogeneity\footnote{Because of the influence of covariates, the expected degree varies across individuals. The underlying density affects the degree distribution because people with common characteristics have an easier time forming connections. Another source of degree heterogeneity can be accommodated for by adding expansiveness factors as previously discussed--for instance by refining the sequence $s_n$ with pair-specific rates using the multiplicative heterogeneity model--,although this is not pursued here for simplicity.}, and homophily. \medskip


\subsubsection{Comparison groups}

Under an asymptotically homophilic network formation process, it is possible to derive estimators whose bias is asymptotically negligible, even if some confounders are unobserved. Given a comparison group $\mathcal{C}_i$ for individual $i$, I define a CATE\footnote{Note that $x_i$ is not fully observed. The CATE of a given individual is identified, but not the underlying function of $x$. For this reason, the CATE is mainly interesting on its own when one cares about the treatment effect of a specific unit.} estimator 
\begin{equation}
    \widehat{\CATE}(x_i; \mathcal{C}_i) \equaldef \frac{1}{\vert \mathcal{C}_{i1}\vert} \sum_{j \in \mathcal{C}_{i1}} Y_j - \frac{1}{\vert \mathcal{C}_{i0} \vert} \sum_{j \in \mathcal{C}_{i0}} Y_j
\end{equation}
where $\mathcal{C}_{it} \equaldef \mathcal{C}_{i} \cap \{j\vert T_j = t\}$. Note that the comparison group will vary with the sample size, although the dependence is left implicit in the notation. \medskip

\begin{remark}\label{kernel_product_remark}
If some observed covariate ($x_k$) that affects the outcome does not influence network formation, it can be controlled for nonparametrically by multiplying each weight $\mathds{1}(j \in \mathcal{C}_{it})$ with kernel weights $K_b(\Vert X_{ik}^o - X_{jk}^o\Vert)$ throughout, or performing (local) regression adjustments within comparison groups. Under Assumption \ref{h_assumption}, this strategy is also valid for covariates that influence network formation in a not necessarily homophilic way. 
\end{remark}

The first idea is to rely on friends to construct a comparison group, \textit{i.e.}, set $\mathcal{C}_i=\mathcal{N}(i)$. This yields a CATE estimator based on the difference between treated friends and non-treated friends. Thanks to triangular inequality relationships and the nature of homophily, however, one can extract additional information from the friendship network. For instance, one can consider friends of friends or higher-order friendships: 

\begin{align}\label{strongCATEm}
\begin{split}
    \widehat{\CATE}(x_i; \cup_{m=1}^M \mathcal{N}^m(i)) & = \frac{1}{\vert \cup_{m=1}^M \mathcal{N}_1^m(i)\vert} \sum_{j \in \cup_{m=1}^M \mathcal{N}_1^m(i)} Y_j \\ 
    & - \frac{1}{\vert \cup_{m=1}^M \mathcal{N}_0^m(i) \vert} \sum_{j \in \cup_{m=1}^M \mathcal{N}_0^m(i)} Y_j
\end{split}
\end{align}
for some upper order of friendship $M$. For $M=1$, this is the simple estimator that compares treated friends and non-treated friends. Although the estimator averages more observations as $M$ increases, the comparison group increasingly selects observations whose characteristics differ from those of $i$. $M=1$ may be a reasonable choice if the ATE is the target, but higher values can be useful, \textit{e.g.}, to estimate a specific CATE. \medskip

An alternative estimator relies on having at least $c$ friends in common:
\begin{align}\label{strongCATEtau}
\begin{split}
    \widehat{\CATE}(x_i; \{j\vert \ \vert \mathcal{N}(i;j)\vert \geq c\}) & = \frac{1}{\vert  \{j\vert \ \vert \mathcal{N}_1(i;j)\vert \geq c\}\vert} \sum_{j \in \{j\vert \ \vert \mathcal{N}_1(i;j)\vert \geq c\}} Y_j \\ 
    & - \frac{1}{\vert \{j\vert \ \vert \mathcal{N}_0(i;j)\vert \geq c\} \vert} \sum_{j \in \{j\vert \ \vert \mathcal{N}_0(i;j)\vert \geq c\}} Y_j
\end{split}
\end{align}

Although both estimators allow for consistent estimation of treatment effects, the composition of the underlying comparison groups can differ significantly. Therefore, it can be a useful robustness check to compare their treatment effect estimates, for instance if one is concerned about peer effects or related issues that would likely affect these estimators in different ways. \medskip

As an illustration, consider Figure 1 where the comparison group for individual $i$ with unobserved characteristics $x_i \in \mathds{R}^2$ is shown in red and the remaining observations in black. The leftmost picture is the unknown\footnote{Except in the extreme network formation process in which individuals select friends deterministically conditional on covariates: $w(x) = \mathds{1}_{[0, C]}(x)$. Then, the first two pictures become identical.} group that consists of all observations below a certain distance. Next, on the right, friends are used as the comparison group, providing a noisy version of (i): selected observations tend to fall close to $x_i$, but some close observations are ignored while observations farther away may be selected nonetheless. \medskip

In the third picture, one looks at friends of friends. This allows us to make use of more observations, which will reduce the variance of the estimator, but there is also a tendency to grab more observations outside the sphere. Finally, the last picture selects individuals who have at least two friends in common with $i$. This typically reduces the bias compared to using the previous groups, but selects fewer observations. \medskip

\begin{figure}[ht!]
\centerline{\includegraphics[width=6in]{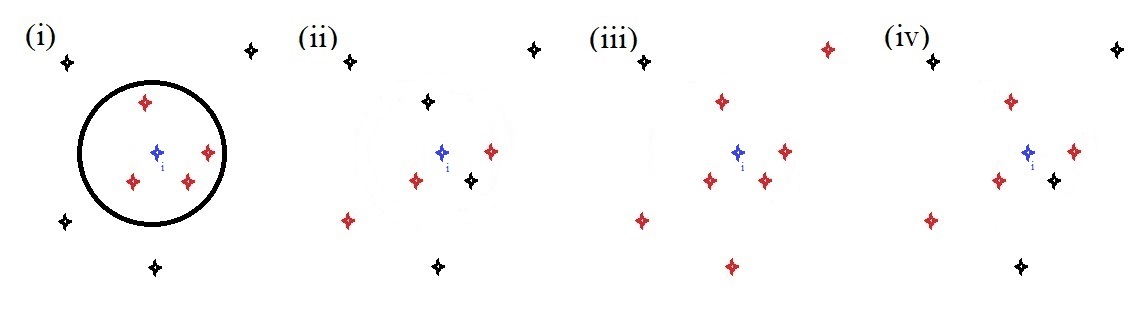}}
\caption{Comparison among possible comparison groups for individual $i$. Observations are represented by stars (blue = $i$; red = included in $\mathcal{C}_i$; black = not included). From left to right, 
(i) (Infeasible) individuals within a given distance of $x_i$, (ii) individuals who are friends with $i$, (iii) individuals who are friend with $i$ or friend of a friend of $i$, (iv) individuals who have two friends in common with $i$.}
\end{figure}

\subsubsection{CATE inference}

Under asymptotic homophily, standard asymptotics are achievable with estimators such as (\ref{strongCATEm}) or (\ref{strongCATEtau}), whose bias asymptotically disappears. As for the variance, its collapse is based on averaging over increasingly many observations. In what follows, I focus on the common case where the network contains a giant component, which comprises a large fraction of the nodes. To ensure this, I assume that $\lambda_n \psi_i > (m_c \vee 1) \ \forall i$ eventually, where $\psi_i \equaldef f(x_i) \int g(\lVert y \rVert) \mbox{d}y$ and $m_c$ is the critical mean degree to ensure the existence of a giant component \citep{penrose2022giant}. The condition is automatically satisfied when $\lambda = \infty$. 

The asymptotic behavior of the size of a comparison group depends on the speed of convergence of $w_n$, of which the following lemma provides a formal analysis. \medskip

\begin{lemma}\label{lemma}
The number of connections for $i$ exceeds any real number with probability approaching one if $\lim_{n \rightarrow \infty} n \int_{\mathds{R}^d} w_n(\Vert x_j - x_i\Vert) f(x_j) \ \mbox{d}x_j = \infty$. \\
Under regular asymptotic homophily, this holds when $\lambda = \infty$; the number of connections remains (a.s.) bounded when $\lambda$ is finite.

Moreover, \\
a) The probability of forming a connection of order up to $M$ is $O\left(\psi_i {s_n}^{-d} (\psi_i \lambda_n)^{M-1}\right)$ with $M = o(\ln(n))$. \\
b) If $n {s_n}^{-2d} \rightarrow 0$, the probability of $i$ and $j$ having at least $c$ friends in common is $O\left((n {s_n}^{-2d})^c\right)$. 
\end{lemma}

The lemma, proven in the appendix, provides conditions under which the sizes of potential comparison groups grow to infinity. It also specifies the rates at which the probabilities of forming a connection up to the $M$-th or having more than $c$ friends in common decrease under asymptotic homophily. With overlap, the lemma ensures the number of treated and untreated connections both grow to infinity when $\lambda = \infty$. In the exact sparsity case, \textit{i.e.} when $\lambda$ is finite, the number of connections remains naturally bounded. Although growing the size of the comparison group is still feasible, \textit{e.g.} by letting $M$ grow with the sample size, this case will require special care because identification may also be affected. \medskip

The condition in the lemma states that $w_n$ must not vanish too quickly to ensure that connections are still being formed. The formal criterion analyzes the integral $\int_{\mathds{R}^d}w_n(\Vert x\Vert) f(x+x_i) \ dx$, suggesting that link functions that do not depend on network size or vanish uniformly slowly enough such as $w_n(x) = {s_n}^{-1} g(x)$ induce non-trivial or unbounded friend counts. Basically, individuals must not become too selective too quickly to ensure that they keep forming connections. This is natural in many networks (\textit{e.g.}, friendship network) as the expected degree is often viewed as, at best, only slowly increasing. \medskip

Asymptotic homophily also implies an improvement in matching quality that is absent when $w_n$ is constant or grows uniformly. Specifically, a sequence such as $w_n(x) = {s_n}^{-1} g(x)$ would stabilize the ``posterior'' distribution $f_{X_j \vert j \in \mathcal{N}(i)}$; it does not imply that people improve their average match in larger networks. As a result, \textit{regular asymptotic homophily} will be key in securing consistency properties. An important part in establishing these is the analysis of the bias 
\begin{align*}
    \begin{split}
        \mathds{B}_i & \equaldef \mathds{E}[Y_j(1) \vert j \in \mathcal{C}_i, T_j=1] - \mathds{E}[Y_j(1)\vert X_j = x_i] \\
        & - (\mathds{E}[Y_j(0) \vert j \in \mathcal{C}_i, T_j=0] - \mathds{E}[Y_j(0)\vert X_j = x_i])
    \end{split}
\end{align*}
which will be shown to disappear under various conditions. Specifically, the following conditions on $w_n$ will ensure that the bias vanishes. \medskip

\begin{assumption}[Hölder continuity of CATE and convergence of link function] \label{Hölder}
a) $\CATE(x)$ is Hölder continuous with exponent $\alpha$ on a neighborhood of $x_i$, \textit{i.e.} for any $x, y$ in the neighborhood $\Vert\CATE(y) - \CATE(x)\Vert \leq C \lVert y - x \rVert^\alpha$ for some $\alpha > 0$. \\
b) For some $\varepsilon_n \downarrow 0$, either $\mathcal{C}_i=\cup_{m=1}^M \mathcal{N}^m(i)$ and $\sum_{m=1}^M (w_n(\frac{\varepsilon_n}{m}))^m = o({(\psi_i \lambda_n})^M n^{-1})$ and $M=M_n \uparrow \infty$ (but slower than $\ln(n)$) if $\lambda$ is finite, or $\mathcal{C}_i=\{j\vert \ \vert \mathcal{N}(i;j)\vert \geq c\}$ and $(w_n(\frac{\varepsilon_n}{2}))^c = o({s_n}^{-d})$ and $\lambda = \infty$.
\end{assumption}

Hölder continuity is a standard assumption that imposes a mild degree of smoothness in the CATE. Part b) of the assumption restricts how $w_n$ decays from the origin; it requires a sufficiently fast convergence away from the origin. As discussed previously, ensuring consistency requires adjusting the comparison groups to ensure that their size grows in the exact sparsity case. A way to achieve this is to let $M=M_n$ grow at a suitable rate. \medskip 

Although consistency can be achieved under very weak conditions, the rate of convergence may be slow and the conditions on $w_n$ are hard to interpret. Assuming that the underlying functions of $X$ are sufficiently smooth implies a clean bias rate of $O({s_n/M}^{-2})$ under regular asymptotic homophily. Although this can be established without requiring more than second-order derivatives, I formulate a more general smoothness assumption that will prove useful when discussing asymptotic normality:

\begin{assumption}[Smoothness] 
The density $f$ is $L$ times continuously differentiable with uniformly bounded $L$-th derivatives and so are the propensity score $p(x) \equaldef \mathds{P}[T_i=1\vert X_i=x]$ and the conditional expectations $\mathds{E}[Y_i(t)\vert X_i=x]$ for $t=0, 1$.
\end{assumption}
Now, the consistency theorem reads
\begin{theorem}[Consistency]\label{CATE_consistency}

a) Suppose $\mathds{E}[Y_j(t)^2] < \infty$ for $t = 0, 1$. \\
Then, $\widehat{\CATE}(x_i; \mathcal{C}_i)$ is consistent for $\CATE(x_i)$ under Assumption \ref{Hölder} whenever $\mathds{E}[\lvert \mathcal{C}_i \rvert]$ grows. 
Moreover, the bias satisfies $\mathds{B}_i = O(\varepsilon_n^\alpha + R)$ with $R= {\lambda_n}^{-M} n \sum_{m=1}^M w_n(\frac{\varepsilon_n}{m})^m$ and $R={s_n}^d w_n(\varepsilon_n/2)^c$, respectively, for $\varepsilon_n \downarrow 0$ as in Assumption \ref{Hölder}. \\
b) If the network formation is regularly asymptotically homophilic, and Smoothness holds with $L >1$, then the estimators are consistent with bias $\mathds{B}_i = O({s_n}^{-2})$. If higher-order friendships are used and $M=M_n$, then the bias is of the order $O\left(M^2 {s_n}^{-2} \right)$.
\end{theorem}

The theorem is proven in the appendix. The main difficulty in part a) is to derive an expression for the bias that can subsequently be bounded via homophilic assumptions and triangular inequalities. In the second part, the existence of derivatives allows the use of Taylor expansions and the derivation shares similarities with nonparametric kernel analysis, though the noisy matching through $w_n$ makes the problem non-standard. \medskip 

The condition that $\mathds{E}[\lvert \mathcal{C}_i \rvert]=n \mathds{P}[j \in \mathcal{C}_i]$ grows is an important caveat. Consistency is not achievable for every individual in the exactly sparse regime because, asymptotically, there is a nontrivial probability that an individual does not form any connections (or that they only have friends with no other connections, etc.), which leads to a non-vanishing variance even if $M$ grows. This further affects the estimation of the ATE as discussed later, although a weighted average of the treatment effects is still consistently estimable. \medskip


I complete the analysis with an asymptotic normality result: CATE estimators are asymptotically normal at $x_i$. Formally, 

\begin{theorem}[Asymptotic Normality]\label{CATE_asymptotic_normality}
Suppose that the assumptions of the Consistency theorem hold and that potential outcomes have $2+\delta$ moments for some $\delta > 0$ (conditional on $X=x_i$). Then, for a bias $\mathds{B}_i$ at location $x_i$, the (conditional) asymptotic distribution reads
\begin{equation*}
    \sqrt{n \mathds{P}[j \in \mathcal{C}_i]} (\widehat{\CATE}(x_i; \mathcal{C}_i)-\CATE(x_i)-\mathds{B}_i) \overset{d}{\rightarrow} \mathcal{N}\left(0; V\right)
\end{equation*}
where $V = \frac{\mathds{V}[Y_j(1)|X_j=x_i]}{\mathds{P}[T_j=1\vert X_j = x_i]}+\frac{\mathds{V}[Y_j(0)|X_j=x_i]}{\mathds{P}[T_j=0\vert X_j = x_i]}$. \\
Moreover, $\mathds{B}_i$ is asymptotically negligible if $\CATE(x)$ is Hölder continuous with exponent $\alpha$ on a neighborhood of $x_i$ and one of the following holds: \\
(i) $\mathcal{C}_i=\cup_{m=1}^M \mathcal{N}^m(i)$ with $\sum_{m=1}^M w_n(\frac{n^{-\gamma}}{m})^m = o((\psi_i \lambda_n)^M n^{-1})$ for $\gamma > \frac{1}{2 \alpha}$ and $M$ grows if $\lambda < \infty$, or \\
(ii) $\mathcal{C}_i=\{j\vert \ \vert \mathcal{N}(i;j)\vert \geq c\}$ and ${s_n}^d w_n(n^{-\gamma})^c = o({\lambda_n}^{-1/2})$, $\lambda=\infty$, and $\gamma > \frac{1}{2 \alpha}$ or \\
(iii) Network formation is regularly asymptotically homophilic, Smoothness holds with $L>1$, and $\sqrt{\mathds{E}[\vert \mathcal{C}_{i}\vert]}{(M/s_n)^2} \rightarrow 0$. 
\end{theorem}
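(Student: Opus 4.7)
The plan is to condition on the $\sigma$-algebra generated by the covariates, treatments, and link shocks, decompose the estimator into a conditionally mean-zero stochastic part plus the bias $\mathds{B}_i$, apply a conditional Lyapunov CLT to the stochastic part, and combine with the rates from Theorem~\ref{CATE_consistency} and Lemma~\ref{lemma} to handle the bias. Let $\mathcal{F}_n = \sigma(\{X_j, T_j\}_{j=1}^n, \{\eta_{jk}\}_{j,k})$, $m_t(x) = \mathds{E}[Y_j(t) \mid X_j = x]$, and $U_j \equiv Y_j - m_{T_j}(X_j)$. By i.i.d.\ sampling of $(Y_j(0), Y_j(1), X_j, T_j)$ and independence of $\eta$ from potential outcomes, the $U_j$ are conditionally independent given $\mathcal{F}_n$, have zero conditional mean, and conditional variance $\mathds{V}[Y_j(T_j) \mid X_j]$; moreover $\mathcal{F}_n$ determines $\mathcal{C}_i$, $\mathcal{C}_{i0}$ and $\mathcal{C}_{i1}$. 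I decompose
\begin{equation*}
    \widehat{\CATE}(x_i; \mathcal{C}_i) - \CATE(x_i) - \mathds{B}_i = S_n + R_n,
\end{equation*}
where $S_n = |\mathcal{C}_{i1}|^{-1}\sum_{j \in \mathcal{C}_{i1}} U_j - |\mathcal{C}_{i0}|^{-1}\sum_{j \in \mathcal{C}_{i0}} U_j$ is the conditional mean-zero part and $R_n$ collects, for $t=0,1$, the deviation of $|\mathcal{C}_{it}|^{-1}\sum_{j \in \mathcal{C}_{it}} m_t(X_j)$ from $\mathds{E}[m_t(X_j) \mid j \in \mathcal{C}_{it}]$.

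For the CLT applied to $S_n$, I condition on $\mathcal{F}_n$. The two partial sums are conditionally independent because $\mathcal{C}_{i1} \cap \mathcal{C}_{i0} = \emptyset$, so Lyapunov can be checked separately. The $2+\delta$ moment assumption bounds the $(2+\delta)$-th conditional moments of $U_j$, and the Lyapunov ratio is of order $|\mathcal{C}_{it}|^{-\delta/2}$, which vanishes on the high-probability event on which $|\mathcal{C}_{it}| \to \infty$ (guaranteed by Lemma~\ref{lemma} together with Overlap). Continuity of $x \mapsto \mathds{V}[Y_j(t) \mid X_j = x]$ combined with the concentration of $X_j$ inside $\mathcal{C}_i$ around $x_i$ (the same concentration underlying the consistency arguments of Theorem~\ref{CATE_consistency}) yields the conditional variance limit $\mathds{V}[Y_j(t) \mid X_j = x_i]$. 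A law of large numbers on the treatment assignments within the concentrated $\mathcal{C}_i$ gives $|\mathcal{C}_{it}|/|\mathcal{C}_i| \overset{p}{\to} \mathds{P}[T_j = t \mid X_j = x_i]$; Slutsky's lemma and independence of the two sums deliver $\sqrt{|\mathcal{C}_i|}\, S_n \overset{d}{\to} \mathcal{N}(0, V)$. The residual $\sqrt{|\mathcal{C}_i|}\, R_n$ is $o_p(1)$ because the posterior variance of $m_t(X_j)$ over $j \in \mathcal{C}_{it}$ shrinks under the same concentration.

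It remains to verify $\sqrt{|\mathcal{C}_i|}\, \mathds{B}_i = o(1)$ in each regime. Under (iii), Theorem~\ref{CATE_consistency}(b) gives $\mathds{B}_i = O(s_n^{-2})$ and Lemma~\ref{lemma} gives $|\mathcal{C}_i| = O_p(\lambda_n)$, so the displayed condition $\sqrt{\lambda_n}/s_n^2 \to 0$ is exactly what is required. Under (i) and (ii), Theorem~\ref{CATE_consistency}(a) with $\varepsilon_n = n^{-\gamma}$ yields $\mathds{B}_i = O(n^{-\gamma\alpha} + s_n^d R)$, where $R$ is the respective sum or product of $w_n$-values; the Hölder contribution vanishes because $\gamma > 1/(2\alpha)$ and $|\mathcal{C}_i| \le n$, and the $s_n^d R$ remainder vanishes once the order of $|\mathcal{C}_i|$ from Lemma~\ref{lemma} is substituted into the displayed rate conditions. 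The main obstacle is the simultaneous handling of the random denominator $|\mathcal{C}_{it}|$ and the network-driven posterior distribution of $X_j$ inside $\mathcal{C}_i$; I would address this by working on a high-probability event on which $|\mathcal{C}_{it}|$ attains its expected order and the $X_j$'s are contained in a shrinking ball around $x_i$, verifying the conditional Lyapunov condition there, and transferring to unconditional convergence via Slutsky.
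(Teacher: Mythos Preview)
Your proposal is correct but takes a genuinely different route from the paper. The paper centers the outcomes at $\mathds{E}[Y_j(t)\mid j\in\mathcal{C}_{it}]$ rather than at $m_{T_j}(X_j)$; with that centering the bias $\mathds{B}_i$ is absorbed exactly and there is no residual $R_n$ to control. It then applies Lindeberg--Feller directly to the triangular array $n^{-1/2}\sum_j \mathds{1}(j\in\mathcal{C}_{it})\,\mathds{P}[j\in\mathcal{C}_{it}]^{-1/2}\bigl(Y_j-\mathds{E}[Y_j(t)\mid j\in\mathcal{C}_{it}]\bigr)$ and combines with the law of large numbers $\lvert\mathcal{C}_{it}\rvert/(n\,\mathds{P}[j\in\mathcal{C}_{it}])\overset{p}{\to}1$ to obtain the limit. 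Your conditioning on $\mathcal{F}_n$ buys exact conditional independence of the $U_j$, so the Lyapunov check is clean regardless of which comparison group is used; the price is the extra term $R_n$, whose negligibility needs the posterior variance of $m_t(X_j)$ over $\mathcal{C}_{it}$ to vanish. Both arguments ultimately rest on the same concentration of $X_j$ around $x_i$ inside $\mathcal{C}_i$, and your treatment of the bias via Theorem~\ref{CATE_consistency} with $\varepsilon_n=n^{-\gamma}$ matches the paper's exactly. One caveat worth flagging: your justification for $\sqrt{\lvert\mathcal{C}_i\rvert}\,R_n=o_p(1)$ is stated as a one-liner, but $R_n$ is $\mathcal{F}_n$-measurable, so you are really asserting an unconditional concentration result for the empirical mean of $m_t(X_j)$ over the (random) set $\mathcal{C}_{it}$; for $M>1$ or the friends-in-common group the indicators $\mathds{1}(j\in\mathcal{C}_{it})$ are correlated across $j$, and you should be explicit that the shrinking posterior variance still controls the cross terms.
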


When the CATE itself is of particular interest, the theorem provides a way to perform standard inference. Perhaps more importantly, consistent estimation of treatment effects at given $x_i$'s suggests that inference about average effects is possible. This is the next result: the average treatment effect can be estimated under unobservable-robust unconfoundedness. 


\subsubsection{ATE inference}

Given the last two theorems, one obtains an estimator $\widehat{\ATE}$ by averaging over a CATE estimator at all $x_i$. The resulting ATE estimator is then consistent and asymptotically normal under regularity conditions. \medskip

Specifically, consider $\meanin \widehat{\CATE}(x_i; \mathcal{C}_i)$ and collect the terms involving $Y_i$ for each $i$ to obtain $\meanin \left(T_i \sum_{j \in \mathcal{C}_{i}} \frac{1}{\vert \mathcal{C}_{j1}\vert} - (1-T_i) \sum_{j \in \mathcal{C}_{i}} \frac{1}{\vert \mathcal{C}_{j0}\vert}\right) Y_i$.\footnote{Although the weights are well defined with probability approaching one, it may be useful to regularize them in finite sample by adding a small vanishing offset to the denominators.} The estimator can be further adjusted in the spirit of AIPW \citep{robins1994estimation} or imputation with regression adjustments \citep{rubin1973use, abadie2011bias, lin2025regression} to obtain a doubly robust version with better asymptotic guarantees. Specifically, consider

\begin{equation}\label{ATE_hat}
    \widehat{\ATE} \equaldef \meanin \hat{Y}_i(1) - \hat{Y}_i(0)
\end{equation}
where imputed potential outcomes are constructed as $\hat{Y}_i(t) = \sumjn \mathds{1}(T_j=t) \omega_{ijt} (Y_j - \hat{\mu}_t(X_j) + \hat{\mu}_t(X_i))$ with $\omega_{ij t} \equaldef \mathds{1}(j \in \mathcal{C}_{i}) \frac{1}{\lvert \mathcal{C}_{i t}\rvert}$ for $t\in \{0, 1\}$ and $\hat{\mu}_t$ is constructed as in the CATE estimators.\footnote{Alternatively, one can directly use $i$'s outcome to impute the corresponding potential outcome: $\hat{Y}_i(t) = \mathds{1}(T_i = t) Y_i + \mathds{1}(T_i=1-t) \sumjn \mathds{1}(T_j=t) \omega_{ijt} (Y_j - \hat{\mu}_t(X_j) + \hat{\mu}_t(X_i))$ as in \citet{lin2025regression}. The two versions share the same asymptotic properties.}\medskip 


There are a few observations for which the comparison group is not defined since some individuals (or their higher-order connections) may not form any link. Those observations can be dropped without consequence for the asymptotics when $\lambda = \infty$, but their removal affects the estimand when $\lambda$ is finite. The following theorem, which describes the asymptotic distribution of the estimator, is thus formulated as an asymptotic normality for estimating a Weighted ATE (WATE). Let $\widehat{\mbox{WATE}}$ denote the estimator (\ref{ATE_hat}) trimmed of the observations that do not belong to the giant component $\mathcal{G}$ of the network, \textit{i.e.}, observations $i=1, \ldots, n$ are now interpreted as originating from the giant component. We have the following result:
\begin{theorem}\label{ATE_CAN}
Suppose that $\mathds{E}[Y_j(t)^2] < \infty$, that condition (iii) of Theorem 2.2 holds, that Smoothness holds for $L > 2$, that the function $g$ is continuously differentiable $L$ times with uniformly bounded $L$-th derivatives, and $\int g(\lVert y \rVert) \lVert y \rVert^L \mbox{d}y < \infty$.\footnote{If there is smoothing over observables in the spirit of Remark \ref{kernel_product_remark}, the kernel should be bounded and satisfy similar differentiability conditions.} 
Then, if $n^{1/2} (\psi_i \lambda_n)^{-M} \ln(n) (s_n/M)^{-\ell} \rightarrow 0$ for $\ell = 2, \ldots, L-1$ and $\sqrt{n} {(s_n/M)}^{-L} \rightarrow 0$,
\begin{equation*}
    \sqrt{n} (\widehat{\mbox{WATE}} - \mbox{WATE}) \overset{d}{\rightarrow} N\left(0; \mathds{E}\left[\frac{\mathds{V}[Y_i(1)\vert X_i]}{p(X_i)} + \frac{\mathds{V}[Y_i(0)\vert X_i]}{1-p(X_i)}\middle\vert \mathcal{G}\right] + \mathds{V}[\CATE(X_i)\vert \mathcal{G}]\right),
\end{equation*}
where $\mbox{WATE} \equaldef \mathds{E}[\mathds{E}[Y_i(1)-Y_i(0)\vert X_i] \vert \mathcal{G}]$.
\end{theorem}

The theorem is proven in the appendix. It can be seen that the asymptotic variance has the form of the semiparametric efficiency bound for estimating the ATE (\citet{hahn1998role, hirano2003efficient}), but with averages taken on the giant component subsample. When $\lambda = \infty$, $\mbox{WATE}$ is the standard ATE and the asymptotic variance reads 
\begin{equation}
    \mathds{E}\left[\frac{\mathds{V}[Y_i(1)\vert X_i]}{p(X_i)} + \frac{\mathds{V}[Y_i(0)\vert X_i]}{1-p(X_i)}\right] + \mathds{V}[\CATE(X_i)],
\end{equation}
which is the standard efficiency bound. The theorem thus enables inference about a weighted average treatment effect under (unobservable-robust) unconfoundedness in an asymptotically efficient way. \medskip

The weights are tied to the likelihood of belonging to the giant component; individuals whose characteristics make them less likely to form connections are under-represented in the average. There are a few sufficient conditions which ensure that $\mbox{WATE}=\mbox{ATE}$. An obvious one is $\lambda =\infty$, in which case the probability of belonging to the giant component converges to $1$. When $\lambda$ is finite, homogeneous treatment effects or constant covariate density $f$ is a sufficient condition. In general, the weighted average is close to the average treatment effect if the covariate density is close to uniform, if there are few isolated nodes, or if treatment heterogeneity is limited.

The rate condition on $s_n$ is strong. Although homophilic matching operates similarly to noisy kernels, the probabilistic nature of $g$ does not allow for any higher-order-kernel-type cancellations. This effectively limits the dimension of continuous unobservables to $5$. 


\subsubsection{Testing for confounding effect of unobservables}

Assuming unconfoundedness conditional on observed covariates $X_i^o$, suppose that a researcher uses observables to construct an estimator of the ATE, say $\widehat{\mbox{ATE}}^o$, whose influence function is the efficient score. \medskip

It may be of interest to test for a possible confounding effect of unobservables featured in network formation by testing whether the difference between $\widehat{\mbox{ATE}}$ and $\widehat{\mbox{ATE}}^o$ is statistically significant.\footnote{I thank an anonymous referee for suggesting this setup for a Hausman test.}\footnote{If one is unwilling to assume one of the sufficient conditions for the WATE to equal the ATE, a possibility is to drop the same individuals from the sample when applying a standard method based on observables and re-interpret the test as testing for equality of WATE.} Formally, if the regularity conditions for asymptotic normality are met, then under the null that the outcome is unaffected by the unobserved components, \textit{i.e.}, $\mbox{CATE}(X_i) \equiv \mbox{CATE}(X_i^o)$, we can establish
\begin{align*}
    \begin{split}
        & \sqrt{n} (\widehat{\mbox{ATE}}-\widehat{\mbox{ATE}}^o) \\ 
        & \rightarrow^d \mathcal{N}\left(0, \mathds{E}\left[\frac{\mathds{V}[Y_i(1)\vert X_i]}{p(X_i)} \frac{(\mathds{P}[T_i=1\vert X_i^o]-p(X_i))^2}{\mathds{P}[T_i=1\vert X_i^o]^2} + \frac{\mathds{V}[Y_i(0)\vert X_i]}{1-p(X_i)} \frac{(\mathds{P}[T_i=1\vert X_i^o]-p(X_i))^2}{(1-\mathds{P}[T_i=1\vert X_i^o])^2}\right]\right)
    \end{split}
\end{align*}
which forms the basis for a test.

\section{Simulations}

I assess the performance of the estimators through simulations. I consider various outcome equations and measure the resulting root mean square error (RMSE). The RMSE of standard estimators that use only observed variables is provided for comparison. \medskip

The variables are generated as follows: a random vector $V$, whose components are uniform, triangular, and sum of three uniforms, is used to construct
\begin{equation*}
    \begin{pmatrix} X_1 \\ X_2 \\ X_3 \end{pmatrix} = \begin{pmatrix} 0.7 && 0.3 && 0 \\ -0.1 && 1 && 0.4 \\ 0 && -0.6 && 0.7 \end{pmatrix} \begin{pmatrix} V_1 \\ V_2 \\ V_3 \end{pmatrix}
\end{equation*}
so that there is a non-trivial correlation structure among the components of $X$. In the baseline, $\mbox{corr}(X_1, X_2)=0.28$, $\mbox{corr}(X_1, X_3)=-0.26$, and $\mbox{corr}(X_2, X_3)=-0.32$, though the results exhibit similar patterns for weaker or stronger correlations. The variances are normalized to one. The first two variables are observed, but the last one is not, \textit{i.e.}, $X^o = (X_1, X_2)$ and $X^u = X_3$. \medskip

The propensity score follows a logistic distribution with argument $X \beta$, where $\beta = \begin{pmatrix} 1 & 1 & \beta_3 \end{pmatrix}'$, and the treatment status is then drawn conditional on $X$. The parameter $\beta_3$ controls selection on unobservables and takes value in $\{0, 0.5, 1\}$. In the first case, the probability of being treated does not change with $X_3$; in the last case, the unobserved variable is on a similar footing as each of the observed ones. The performance of traditional methods that cannot account for the unobserved component is expected to deteriorate as $\beta_3$ increases. \medskip

The outcome equation is given by $y = 5 + \mbox{CATE}(x) T + g(x) + \varepsilon$, $\varepsilon \sim \mathcal{N}(0, 1)$. I explore three specifications:  
\begin{itemize}
    \item Homogeneous treatment effects ($\mbox{CATE}(x) \equiv 1$) with linear impact of unobservables ($g$ is linear in $x$)
    \item Heterogeneous treatment effects ($\mbox{CATE}(x) = 2 \Phi(-x_1+x_2+x_3)$) with linear impact of unobservables ($g$ is linear in $x$)
    \item Heterogeneous treatment effects ($\mbox{CATE}(x) = 2 \Phi(-x_1+x_2+x_3)$) with quadratic impact of unobservables ($g$ contains both a linear term in $x$ and a quadratic term ${x_3}^2+x_2 x_3$)
\end{itemize}


The network formation process uses $w(x)=e^{-\frac{1}{2} x^2}$.\footnote{Other specifications such as $w(x)=\mathds{1}_{x<1}$ or $w(x)=\max\{1-x, 0\}$ deliver similar results.} The baseline sample size is $n=500$ and $s_n$ is calibrated so that the average number of friends is roughly five to six, which is around the number of close friends people often report, as in the application. As $n$ rises, ${s_n}^d$ evolves at the rate $n / \ln(n)$. \medskip 

The researcher controls for observables using kernel weights that multiply network weights (Remark \ref{kernel_product_remark}). People form friendship links based on $X_2$ and $X_3$. As a result, only $X_1$ is unaccounted for in network weights, but a researcher observing $X_2$ may want to further include it in the kernel weights. I consider both possibilities (referred to as base-control and over-control cases below). Alternative methods (\textit{e.g.}, OLS) always make use of all observables. \medskip

The results for all estimators and sample sizes $n=500, 2000$ are reported in Appendix C. For a simple and representative summary, the results for the specifications $M=1, c=2$ (orange and yellow lines, respectively) and OLS (blue line) are graphically depicted below for the 3 types of outcome equations and $n=500$. \medskip

\begin{figure}[ht!]
    {\includegraphics[width=2.3in]{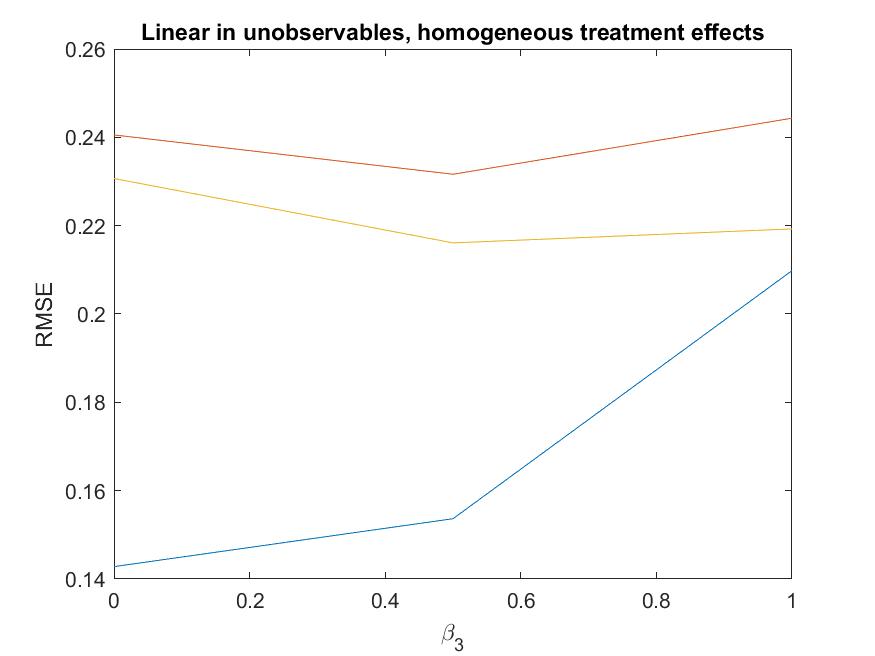}}\includegraphics[width=2.3in]{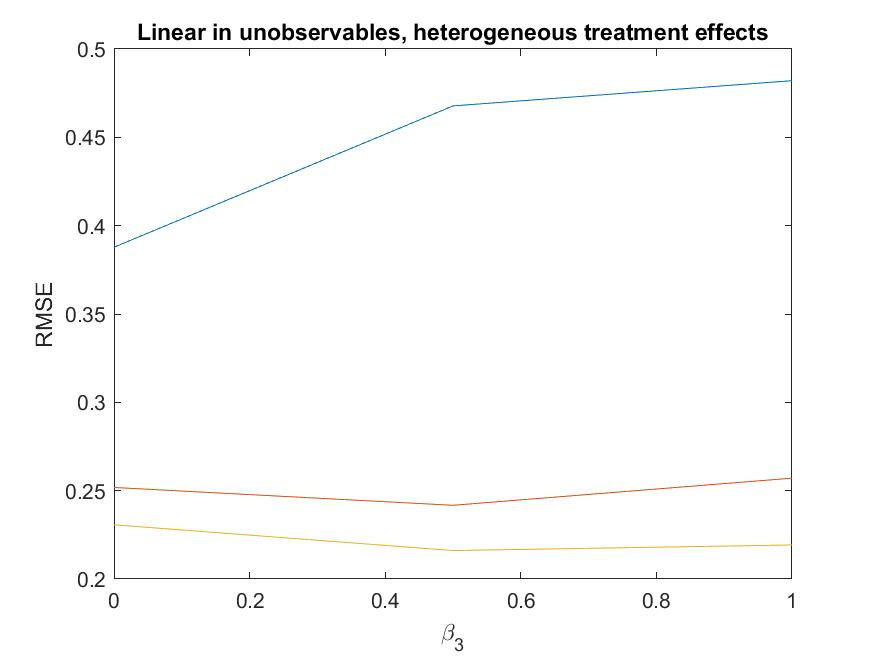}\includegraphics[width=2.3in]{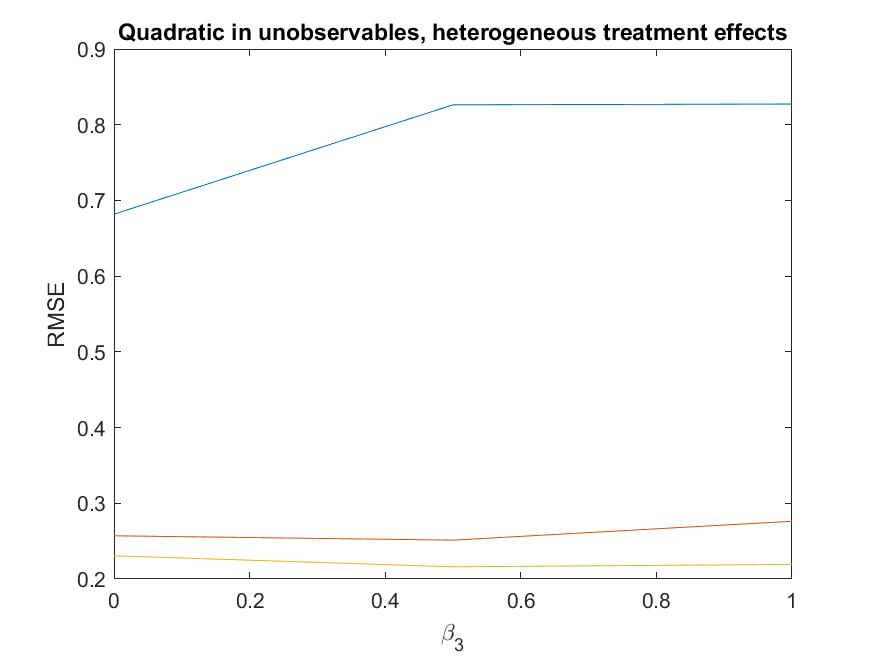}
    {\includegraphics[width=2.3in]{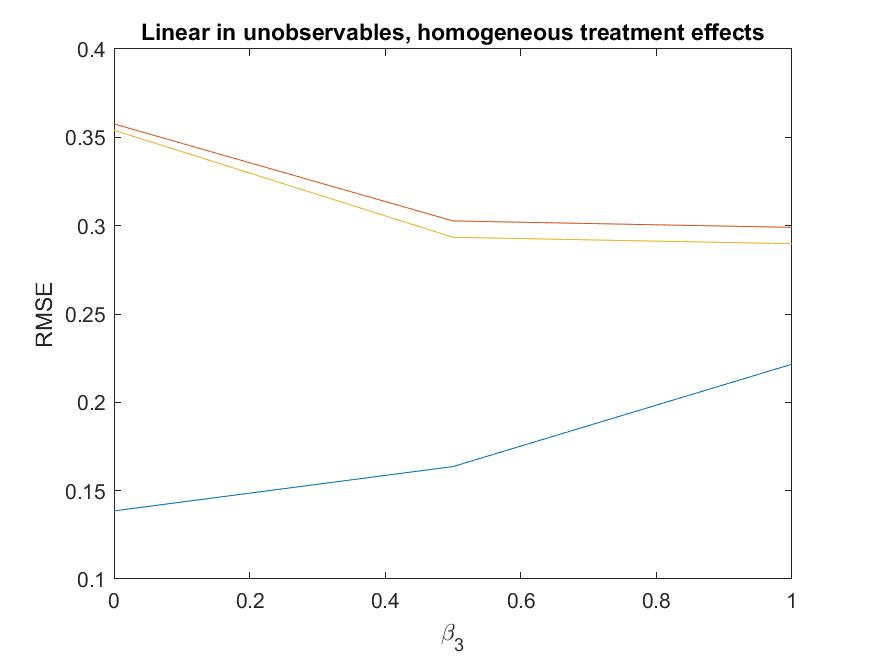}}\includegraphics[width=2.3in]{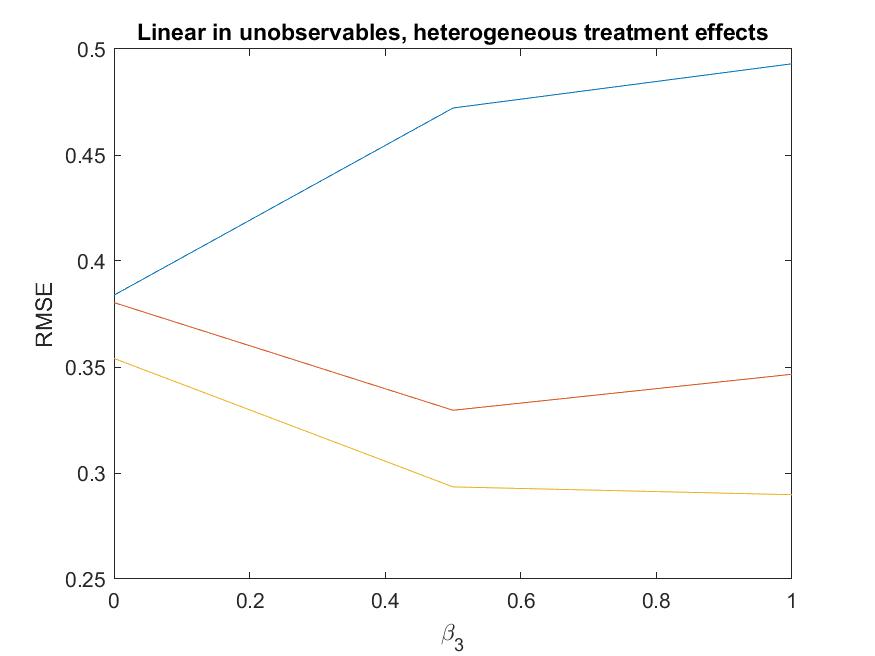}\includegraphics[width=2.3in]{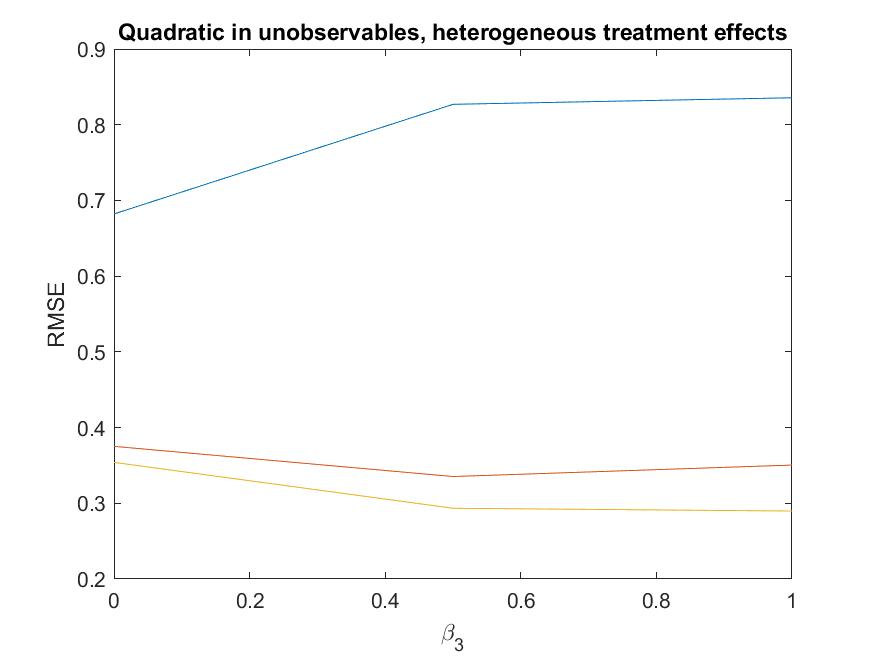}
\caption{RMSE of OLS, $\ATE$ estimators using $M=1, c=2$. The first row refers to the base-control specification; the second row refers to the over-control specification.}
\end{figure}
In the linear homogeneous case, the OLS estimator performs well, especially when there is no selection on unobservables. Its performance deteriorates as $\beta_3$ increases and it reaches an RMSE similar to that of the proposed estimators when the selection on the unobserved variable is on par with the level of selection on observables. Furthermore, it behaves poorly when the unobservables become more prevalent in the functional form or the treatment response. \medskip

In contrast, estimators that leverage network information are able to perform similarly regardless of the strength of selection in unobservables and dominate OLS across specifications unless we force homogeneity and linearity. In the presence of unobserved confounding, the properties of OLS become quickly unappealing, especially when the unobserved variable has nonlinear impacts on the outcome or the treatment effect. Other methods based on estimating the propensity score also fail to deliver reliable estimates of the ATE as they can only accommodate heterogeneity and nonlinearities that arise from observed confounders. As a result, they are even dominated by OLS and \textit{a fortiori} by the proposed estimators. \medskip

Finally, the simulations provide some guidance for empirical decisions. The choice of $M$ or $c$ does not appear to have a strong influence on the performance of the estimator. Over-controlling leads to a general increase in RMSE, but the estimators still perform well and improve substantially over OLS in nonlinear or heterogeneous settings. Therefore, it is advisable to separately control for observables whose relevance to network formation is uncertain.

\section{Application}

I provide an application of the method to the estimation of the effect of parental involvement on students' test scores. I use the dataset from the project ``Attitudes and Relationships among Primary and High School Students'', see \citet{DVN/ZHCTCK_2022}. The dataset contains information about 4409 Brazilian high-school students, their beliefs, and friendship ties among them. \medskip

The outcome of interest is the average grade, ranging from 0 to 10, and the treatment is the level of parental support. The average is taken over math, Portuguese, English, history, geography, and art grades.\footnote{Although additional subjects are available, including them would require dropping a substantial number of observations because of missing data.} For comparability across school years, I normalize the grade by subtracting the mean over students in a given year. The dataset contains a (self-reported) score for parent support which ranges from around -3 to 1; the treatment is dichotomized by truncating around the mean of 0 and the goal is to estimate the ATE.  \medskip

Although some covariates are available--age, gender, race, religion, class dummies, whether parents are employed, poverty, and importance of study for the child--they are at best proxies for the underlying causes. Thus, omitted variable bias remains a concern, especially due to the usual unobserved ability. \medskip

The sign of the omitted variable bias is unclear in this context, even assuming that the score for the importance of studying controls adequately for motivation. Intuitively, children with lower ability may require more attention but are also more likely through heredity to have less able parents, who may be less inclined or able to help. \medskip

Formally, a possible model for the individual's grade, $y_i$, would be $y_i = F(a_i; m_i; s_i)$ where $F$ is an unknown function, $a_i$ is ability, $m_i$ is the level of (intrinsic) motivation\footnote{Grades may be affected directly or indirectly through increased effort. The target is the total effect that includes the mediated effect, conditioning on intrinsic motivation and ability.}, and $s_i$ is parental support. Denoting parents' ability by $A_i$, one could specify $a_i = A_i+\mbox{noise}_i$ and $s_i = \delta_1 a_i + \delta_2 A_i + \mbox{noise}_i$ with independent noises. This would imply $s_i = (\delta_1+\delta_2) a_i + \mbox{noise}_i$, where the signs of the deltas are likely to be negative and positive, respectively, leading to an indeterminate correlation sign. Hence, it is not obvious whether parents pay more attention to children with higher or lower ability. Moreover, the function $F$ is unlikely to be linear since, \textit{e.g.}, ability may increase the return to motivation and parent support and there may be nonlinear returns to motivation and/or ability. \medskip 

As a result, not only is the OLS estimate likely to be unreliable, but it is also difficult to figure out the direction of the bias. This motivates the use of alternative methods that can account for the presence of unobservables. \medskip

There is evidence that people, teenagers in particular, form friendship links based on intelligence. Some studies \citep{clark1992friendship, burgess2011school, boutwell2017general} have documented homophilic matching on various measures of intelligence among teenagers. According to \citet{boutwell2017general}, ``preadolescent friendship dyads are robustly correlated on measures of general intelligence''. \medskip

It is thus plausible that friendship ties account for ability, suggesting an avenue for correcting the ability bias through the estimator developed in this paper.\footnote{There is also some evidence for homophily in some of the covariates. The ratio of the average distance in age and gender among friends to the average distance between any two individuals is about one fourth and one half, respectively.} For each individual, I use friends as a comparison group\footnote{The number of people with no reported friend is relatively high, but most of these students are also missing covariates. This suggests that zero friend counts are more indicative of a missing data problem than of general asocial behavior. Consequently, I restrict the sample to the sample used for OLS, which also facilitates comparability. The effective sample size is then 2777 and people have an average number of about 5 friends, as calibrated for the simulation study. This relatively low number is consistent with situations in which people report close friends, which is generally the preferred target for the type of matching exercises that the method exploits.} and compute $\widehat{\ATE}$. \medskip

One may also consider controlling for all observed variables for further comparison to OLS or because it is believed that some variable, \textit{e.g.}, the importance of study score, should be directly controlled for. Since they are quite numerous, a possibility is to perform regression adjustments as follows: run a regression of $y_i$ on the treatment and relevant controls for each $i$ using observations in $\mathcal{C}_i$, then form an optimal weighted average of the treatment effect estimates. \medskip


According to OLS, the treatment has a small effect of 0.04. The estimator developed in the paper, however, suggests a much higher effect of $0.21$. Using all controls included in the OLS regression yields a treatment effect estimate of $0.14$ (standard error $0.07$). Because of the control for unobserved ability and its potential nonlinear interactions, the latter estimates may be more reasonable. 

\begin{table}[!ht]\caption{Estimates and standard errors}\label{application_results}
\[\begin{array}{ | l | l | l |}
\hline
 & \mbox{Estimated ATE} & \mbox{standard error}  \\ \hline
\widehat{\ATE} & 0.21 & 0.053  \\ \hline
\widehat{\ATE} \ \mbox{(all controls)} & 0.14 & 0.072  \\ \hline
\hat{\beta}_{OLS} & 0.04 & 0.017 \\ \hline
\end{array}\]
\end{table}


\begin{remark}
    Running a regression without the parent's employment status and the poverty score increases the treatment effect estimate from OLS to 0.11. Going back to the model $s_i = \delta_1 a_i + \delta_2 \mbox{parents' ability}_i + \mbox{noise}_i$, this may indicate that OLS may be \textit{more} biased upon controlling for parent's employment and poverty because these variables may act as proxies for parent ability and thus increase the conditional correlation between ability and parent support. \medskip
\end{remark}
\begin{remark}
    Another sanity check consists in varying the composition of the comparison group. Adding friends up to the second order or using people with at least two friends in common as a comparison group produces similar results: the effect is estimated to be $0.21$ or $0.18$ (se 0.06 in both cases). Because these versions of the estimator rely on comparison groups with different relationships to the individual, this can alleviate concerns that the effect is driven by other factors.
\end{remark}


\bibliographystyle{aea}
\bibliography{References}

\section*{Appendix A: Proofs}

\subsection{Lemma \ref{lemma}}

\begin{proof}
Consider a sample of $(n+1)$ observations, including $i$. By independence, the degree of individual $i$ satisfies $d_i \equaldef \vert \mathcal{N}(i) \vert = \sum_{j \neq i, j=1}^{n+1} \mathds{1}_{\eta_{ij} \leq w_n(\Vert x_i - X_j \Vert)} \sim \mathcal{B}(n, \pi_n)$ with $\pi_n \equaldef \mathds{E}[w_n(\Vert x_i - X_j\Vert)] = \int_{\mathds{R}^d} w_n(\Vert x_j-x_i\Vert) f(x_j) \ \mbox{d}x_j$ by the law of iterated expectation and distributional properties of $\eta$. If $n \pi_n \rightarrow C > 0$, the friend count asymptotically follows a Poisson distribution with parameter $C$, while slower sequences $\pi_n$ induce an unbounded friend count: by Chebyshev inequality we have for any $N$ and large sufficiently large $n$
\begin{align*}
    \begin{split}
        \mathds{P}[d_i \leq N] & \leq \mathds{P}[\vert d_i - n \pi_n\vert \geq n \pi_n - N] \\
        & \leq \frac{n \pi_n (1-\pi_n)}{(n \pi_n - N)^2} \\
        & = \frac{1-\pi_n}{n \pi_n (1-\frac{N}{n \pi_n})^2}
    \end{split}
\end{align*}
and thus $\mathds{P}[d_i \leq N] \rightarrow 0$ as long as $n \pi_n = n \int_{\mathds{R}^d} w_n(\Vert x_j-x_i\Vert) f(x_j) \ \mbox{d}x_j \rightarrow \infty$. \medskip

This is the case when the network formation is regularly asymptotically homophilic. Indeed, consider

\begin{align*}
    \begin{split}
        \int_{\mathds{R}^d} w_n(\Vert x-x_i \Vert) f(x) \ dx & = \int_{\mathds{R}^d} g(s_n \Vert x-x_i \Vert) f(x) \ dx  \\
        & \geq \underline{f} \int_{B_{\underline{r}}(x_i)} g(s_n \Vert x-x_i \Vert) \ dx \\
        & \geq \frac{\underline{f}}{{s_n}^{d}} \int_{B_{s_n \underline{r}}(0)} g(\Vert y\Vert) \ dy \\
        & \geq \frac{\underline{f}}{{s_n}^{d}} \int_{B_{\underline{r}}(0)} g(\Vert y\Vert) \ dy
    \end{split}
\end{align*}
for sufficiently large $n$. Since $\int_{\mathds{R}^d} g(\Vert y\Vert) \ dy > 0$ and $g$ is decreasing, $\int_{B_{\underline{r}}(0)} g(\Vert y\Vert) \ dy > 0$ (otherwise, $\int_{B_{\underline{r}s_n}(0)} g(\Vert y\Vert) \ dy = 0$ implies $g \equiv 0$, contradicting the former assertion). Thus, \textit{regular asymptotic homophily} suffices to establish $\lim_{n \rightarrow \infty} n \int_{\mathds{R}^d} w_n(\Vert x_j-x_i\Vert) f(x_j) \ \mbox{d}x_j = \infty$ when $\lambda = \infty$. \medskip

For part a), write $w_n(\Vert x_i - x_j\Vert)$ as $w_{ij}$.
Noting that the covariate density is continuous and bounded, and that $g$ is bounded, the dominated convergence theorem applies below and thus
 \begin{align*}
    \begin{split}
	\mathds{E}[\vert \mathcal{N}_i^m\vert] & = \frac{n!}{(n-m)!} \int \prod_{k=0}^{m-1} g(s_n \Vert x_{j_k} - x_{j_{k+1}}\Vert) \prod_{k=1}^m f(x_{j_k}) \prod_{l> k + 1} (1-w_{j_k j_l}) \ \mbox{d}\left(\prod_{k=1}^m x_{j_k}\right) \\
	& = O(n^m) {s_n}^{-md} \int \prod_{k=1}^{m} g(\Vert y_k\Vert) \prod_{k=1}^m f\left(x_i - \frac{y_k+y_{k-1}+...}{s_n}\right) \ \mbox{d}\left(\prod_{k=1}^m y_{k}\right) \\
	& = O(n^m) \frac{\psi_i^m}{{s_n}^{md}} O(1)
\end{split}
\end{align*}
which establishes that $\mathds{E}[\vert \mathcal{N}_i^m\vert] = O\left((n \psi_i {s_n}^{-d})^m\right)$, and thus collecting friendship up to the $M$-th order gives $\mathds{E}[\vert \mathcal{C}_i \vert] = O\left((n \psi_i {s_n}^{-d})^M\right)$. 




Finally, for part b), the number of friends in common between $i$ and $j$ follows a binomial distribution with probability $O({s_n}^{-2d})$. Since probability of having $c$ or more friends concentrates on exactly $c$ friends, and that there are $O(n^c)$ ways of picking those $c$ individuals, the order of the probability is $O(n^c) O(({s_n}^{-2d})^c) = O((n {s_n}^{-2d})^c)$.

\end{proof}




\subsection{Theorem \ref{CATE_consistency}}

\begin{proof}
\textbf{Part a)}
Lemma 2.1 ensures that the asymptotics apply under the relevant condition, possibly using the higher-order friendship with growing $M$ if $\lambda < \infty$. By the overlap assumption, the number of treated friends and untreated friends both grow to infinity. 

Consider the treated group. Letting $\tilde{Y}_{j;n}$ denote i.i.d. draws from $Y(1)\vert \mathcal{C}_{i1}$, we have $\sup_n \mathds{E}[(\tilde{Y}_{j;n})^2] < \infty$ since $\mathds{E}[Y_j(1)^2 \vert \mathcal{C}_{i1}] \rightarrow \mathds{E}[Y_j(1)^2 \vert X_j = x_i, T_j = 1] = \mathds{E}[Y_j(1)^2 \vert X_j = x_i]$ by continuity (this is shown in details for $Y_j(1)$ below) and unconfoundedness. 
Then, $\frac{1}{\vert \mathcal{C}_{i1}\vert} \sum_{j \in \mathcal{C}_{i1}} (Y_j(T_j) - \mathds{E}[Y_j(1)|\mathcal{C}_{i1}]) + \mathds{E}[Y_j(1)|\mathcal{C}_{i1}] - \mathds{E}[Y_j(1)|X_j=x_i] \overset{p}{\rightarrow} 0$ by the law of large numbers for triangular arrays and continuity, which is formally established by bounding $\mathds{E}[Y_j(1)|\mathcal{C}_{i1}] - \mathds{E}[Y_j(1)|X_j=x_i]$ as follows. 

Divide the expectation into an integral over a $\varepsilon$-ball centered at $x_i$ and an integral over its complement. Within the ball, 
\begin{align*}
    \begin{split}
        & \left\vert \int_{B_\varepsilon(x_i)} (\mathds{E}[Y_j(1)|X=x]-\mathds{E}[Y_j(1)|X=x_i]) f_{X|\mathcal{C}_i, T} \ dx \right\vert\\
        & \leq \sup_{B_\varepsilon} \vert \mathds{E}[Y_j(1)|X=x]-\mathds{E}[Y_j(1)|X=x_i] \vert \\
        & = O(\varepsilon^\alpha)
    \end{split}
\end{align*}
by Hölder continuity. \medskip

The integral outside the ball, \\
$\int_{B_\varepsilon^c(x_i)} (\mathds{E}[Y_j(1)|X=x]-\mathds{E}[Y_j(1)|X=x_i]) f_{X\vert \mathcal{C}_i, T} \ dx$, can be bounded using information about the composition of $\mathcal{C}_i$. I proceed with the two groups separately. \\
For the first choice of comparison group, \textit{i.e.}, $\mathcal{C}_i=\cup_{m=1}^M \mathcal{N}^m(i)$, note that $\varepsilon < \Vert x_j-x_i\Vert = \left\Vert \sum_{k=0}^{m-1} x_{j_{k+1}} - x_{j_k}\right\Vert \leq \sum_{k=0}^{m-1} \Vert x_{j_{k+1}} - x_{j_k}\Vert$ and thus outside the ball

\begin{align*}
    \begin{split}
        f_{\mathcal{C}_i\vert X_j, T_j} & = \sum_{m=1}^M \mathds{P}[j \in \mathcal{N}_i^m, j \notin \mathcal{N}_i^{m'}, m'<m\vert X, T] \\
        & = \sum_{m=1}^M \mathds{E}[\mathds{P}[j \in \mathcal{N}_i^m, j \notin \mathcal{N}_i^{m'}, m'<m\vert \{X_{j_k}\}, T]\vert X_j, T_j]\\
        & \leq \sum_{m=1}^M \int \prod_{k=0}^{m-1} w_{j_{k+1} j_k} \prod_{l \neq k} (1-w_{j_k j_l}) f_{X_{-j}\vert X_j} \\
        & \leq \sum_{m=1}^M w_n\left(\frac{\varepsilon}{m}\right)^m
    \end{split}
\end{align*}

\noindent It follows that 

\begin{align*}
    \begin{split}
        & \left\vert \int_{B_\varepsilon^c(x_i)} (\mathds{E}[Y_j(1)|X=x]-\mathds{E}[Y_j(1)|X=x_i]) f_{X\vert \mathcal{C}_i, T} \ dx \right\vert \\
        & \leq \int_{B_\varepsilon^c(x_i)} \left\vert \mathds{E}[Y_j(1)|X=x]-\mathds{E}[Y_j(1)|X=x_i]\right\vert \frac{\mathds{P}[\mathcal{C}_{i} \vert X=x] f_{X, T}(x, 1)}{\mathds{P}[\mathcal{C}_i, T=1]} \ dx \\
        & \leq \frac{\sum_{m=1}^M \left(w_n(\frac{\varepsilon}{m})\right)^m (\mathds{E}[\vert \mathds{E}[Y_j(1)|X]\vert]+\mathds{E}[Y_j(1)|X=x_i])}{\mathds{P}[\mathcal{C}_i, T=1]} \\
        & \leq C \frac{n}{{\lambda_n}^M} \sum_{m=1}^M \left(w_n\left(\frac{\varepsilon}{m}\right)\right)^m
    \end{split}
\end{align*}

Hence, the integrals are $O({\varepsilon_n}^{\alpha})$ and $O\left(\frac{n}{{\lambda_n}^M} \sum_{m=1}^M (w_n(\frac{\varepsilon_n}{m}))^m\right)$, respectively, and the bias disappears provided $\sum_{m=1}^M (w_n(\frac{\varepsilon_n}{m}))^m = o({\lambda_n}^M n^{-1})$. 
Applying the same reasoning to the non-treated gives the result for the first choice of comparison group. \medskip

For the second comparison group, the conditional distribution (leaving the conditioning on $T=t$ implicit) satisfies 

\begin{align*}
    \begin{split}
        f_{X_j \vert \mathcal{C}_{i}} & = \frac{\mathds{P}[\mathcal{C}_{i}\vert X_j] f(x)}{\int \mathds{P}[\mathcal{C}_{i}\vert X_j] f(x) \ \mbox{d}x} \\
        & = \frac{\mathds{E}[\mathds{P}[W_{i1} = W_{j1} = \cdots = W_{ic} = W_{jc} = 1\vert X_1, \cdots, X_j]] f(x)}{\int \mathds{E}[\mathds{P}[W_{i1} = W_{j1} = \cdots = W_{ic} = W_{jc} = 1\vert X_1, \cdots, X_j]] f(x) \ \mbox{d}x} \\
        & = \frac{\prod_{\tilde{c}=1}^c \int w_n(\Vert x_i - x_k\Vert) w_n(\Vert x_j - x_k\Vert) f(x_k) \mbox{d}x_k f(x)}{\prod_{\tilde{c}=1}^c \int w_n(\Vert x_i - x_k\Vert) w_n(\Vert x_j - x_k\Vert) f(x_k) \mbox{d}x_k f(x) \ \mbox{d}x} \\
        & \leq C {s_n}^d \sum_{\tilde{c}=c}^{n-2} \left(w_n\left(\frac{\varepsilon_n}{2}\right)\right)^c \\
        & = C {s_n}^d \left(w_n\left(\frac{\varepsilon_n}{2}\right)\right)^c \ O(1)
    \end{split}
\end{align*}
so that we want $\left(w_n(\frac{\varepsilon_n}{2})\right)^c = o({s_n}^{-d})$

\textbf{Part b)}
One can compute the bias originating from each exact friendship order $m$, then aggregate in a way similar to the previous arguments. For $\mathcal{C}_i$ consisting of $m$-th order friends, we have
\begin{align*}
    \begin{split}
       \mathds{E}[Y_j(t) \vert \mathcal{C}_i, T_j=t] & = \mathds{E}[\mathds{E}[Y_j(t) \vert X_j]\vert \mathcal{C}_i, T_j=t] \\
        & = \int_{\mathds{R}^d} \mathds{E}[Y_j(t) \vert x_j] f_{x_j\vert \mathcal{C}_i, x_i, t} \mbox{d}x_j \\
        & = \int_{\mathds{R}^d} \mathds{E}[Y_j(t) \vert x_j] f_{x_j\vert \mathcal{C}_i, x_i, t}  \mbox{d}x_j \\
        & = \int_{\mathds{R}^d} \mathds{E}[Y_j(t) \vert x_j] \frac{\mathds{P}[j \in \mathcal{C}_{i}\vert x_i, x_j] p_{t}(x_j) f_{x_j}}{\int_{\mathds{R}^d} \mathds{P}[j \in \mathcal{C}_{i}\vert x_i, x_j] p_{t}(x_j) f_{x_j} \mbox{d}x_j} \mbox{d}x_j
    \end{split}
\end{align*}
Expanding the probability of $\mathds{P}[j \in \mathcal{C}_i]$ into possible paths $x_i, x_{j_1}, ... , x_{j_{m-1}}, x_j$ from $i$ to $j$ as in the previous proofs, and introducing the changes of variable,
\begin{equation*}
    \begin{pmatrix}
    x_i \\ x_{j_1}-x_i \\ ... \\ x_j - x_{j_{m-1}}
\end{pmatrix} = 
\begin{pmatrix}
    1 & 0 & 0 & \ldots & 0 \\ 1 & -1/s_n & 0 & \ldots& 0 \\ 1 & -1/s_n & -1/s_n & \ldots & 0 \\ \ldots & \ldots & \ldots & \ldots & \ldots \\ 1 & -1/s_n & -1/s_n & \ldots & -1/s_n
\end{pmatrix} \begin{pmatrix}
    x_i \\ y_1 \\ \ldots \\ y_m
\end{pmatrix},
\end{equation*}
with Jacobian determinant $\vert (-1)^m {s_n}^{-md}\vert$, and letting $y^{\tilde{m}} \equaldef \sum_{k=1}^{\tilde{m}} y_k$, we obtain 
 \begin{align*}
    \begin{split}       
        & = \int_{(\mathds{R}^d)^m} \mathds{E}\left[Y_j(t) \vert x_i-y^m/s_n\right] \frac{\prod_{{\tilde{m}}=1}^m g(\Vert y_{\tilde{m}} \Vert) p_{t}(x_i-y^m/s_n) \prod_{{\tilde{m}}=1}^m f_{x_i-y^{\tilde{m}}/s_n}}{\int_{(\mathds{R}^d)^m} \prod_{{\tilde{m}}=1}^m g(\Vert y_{\tilde{m}} \Vert) p_{t}(x_i-y^m/s_n) \prod_{{\tilde{m}}=1}^m f_{x_i-y^{\tilde{m}}/s_n} \prod_{{\tilde{m}}=1}^m \mbox{d}y_{\tilde{m}}} \prod_{{\tilde{m}}=1}^m \mbox{d}y_{\tilde{m}} \\
        & = \mathds{E}[Y_j(t) \vert x_i] + C {s_n}^{-2} m \int \lVert y\rVert^2 g(\lVert y\rVert) \mbox{d}y \frac{(\int g(\lVert y\rVert) \mbox{d}y)^{m-1}}{(\int g(\lVert y\rVert) \mbox{d}y)^m}\\
        & = \mathds{E}[Y_j(t) \vert x_i] + O(m {s_n}^{-2})
    \end{split}
\end{align*}
using a second-order expansion in $y$ in all densities, expectation, and propensity scores (noting that $\frac{A+a_n}{B+b_n}=\frac{A}{B} \left(1+\frac{a_n}{A}\right)\left(1-\frac{b_n}{B+b_n}\right)=\frac{A}{B} + O(a_n) + O(b_n) + O(a_n b_n)$ and that the $O({s_n}^{-1})$ terms vanish by symmetry: $\int y_{\tilde{m}} g(\lVert y_{\tilde{m}} \rVert) \mbox{d}y_{\tilde{m}}=0 $) and that second-order derivatives are bounded.

Aggregating over $m$ finally yields the order of $O(M^2 {s_n}^{-2})$.

\end{proof}

\subsection{Theorem \ref{CATE_asymptotic_normality}}

\begin{proof}

By the Law of Large Numbers,
\begin{equation}
    \left(\frac{\vert \mathcal{C}_{it}\vert}{\mathds{E}[\vert \mathcal{C}_{it}\vert]} - 1\right) = \frac{1}{n} \sumjn \left(\frac{\mathds{1}(j \in \mathcal{C}_{it})}{\mathds{P}[j \in \mathcal{C}_{it}]} - 1\right) \overset{p}{\rightarrow} 0
\end{equation}
for $t \in \{0, 1\}$.\footnote{If $M\neq 1$, there is some correlation between the indicators through the chain of friends. However, this correlation is negligible since $M$ remains fixed or scales slower than even $\ln(n)$.} Next, by Lindeberg-Feller's central limit theorem where Lindeberg's condition follows from the dominated convergence theorem under $2+\delta$-moment existence through Lyapunov's condition,
\begin{align*}
    \begin{split}
        & \begin{pmatrix} \frac{1}{\sqrt{n}} \sumjn \frac{\mathds{1}(j \in \mathcal{C}_{i1})}{\sqrt{\mathds{P}[j \in \mathcal{C}_{i1}]}} \left(Y_j - \mathds{E}[Y_j(1)|\mathcal{C}_i, T_j=1]\right) \\ \frac{1}{\sqrt{n}} \sumjn \frac{\mathds{1}(j \in \mathcal{C}_{i0})}{\sqrt{\mathds{P}[j \in \mathcal{C}_{i0}]}} \left(Y_j - \mathds{E}[Y_j(0)|\mathcal{C}_i, T_j=0]\right) \end{pmatrix}\\
        & \overset{d}{\rightarrow} \mathcal{N}\left(\begin{pmatrix} 0 \\ 0 \end{pmatrix}; \begin{pmatrix}\mathds{V}[Y_j(1)|X_j=x_i] & 0 \\ 0 & \mathds{V}[Y_j(0)|X_j=x_i] \end{pmatrix}\right)
    \end{split}
\end{align*}
Combining these two results and noting that $\vert \mathcal{C}_{i} \vert = \vert \mathcal{C}_{i1} \vert + \vert \mathcal{C}_{i0} \vert$, it follows that 
\begin{equation}
    \sqrt{\mathds{E}[\vert \mathcal{C}_i\vert]} (\widehat{\CATE}(x_i; \mathcal{C}_i)-\CATE(x_i)-\mathds{B}_i) \overset{d}{\rightarrow} \mathcal{N}\left(0; V\right)
\end{equation}

Finally, the bias is negligible when $\sqrt{\mathds{E}[\vert \mathcal{C}_i\vert]}  (\mathds{E}[Y_j(t)\vert \mathcal{C}_i, T_j=t] - \mathds{E}[Y_j(t)|X_j=x_i]) \rightarrow 0$. From the consistency proof, it is seen that this occurs if $\varepsilon_n = n^{-\gamma}$ under the conditions of the theorem.
\end{proof}

\subsection{Theorem \ref{ATE_CAN}}

In what follows, the sample is implicitly restricted to observations belonging to the giant component. Averaging and expectations are thus understood as conditional on $\mathcal{G}$ so that, \textit{e.g.}, mentions of the ATE refer to the WATE.

The main preliminary result, which consists in establishing a variant of Assumption 3.3 (ii) in \citet{lin2025regression}, is the mean-square convergence of the weights to inverse-propensity score quantities:
\begin{lemma}\label{weights_to_score_lemma}
    We have 
    \begin{equation}
    \mathds{E}\left[\left(\sum_j \omega_{ji T_i} - \frac{T_i}{p(X_i)} - \frac{1-T_i}{1-p(X_i)}\right)^2\right] \rightarrow 0
    \end{equation}
\end{lemma}
\begin{proof}

The proof follows steps similar to Lemma B.3. in \citet{lin2025regression}. In what follows, $\boldsymbol{T}$ denotes the vector of treatment statuses and $\boldsymbol{X}$ the collection of covariates, $f_t$ is the density of $X$ conditional on treatment status $t \in \{0, 1\}$, and it is understood that $j \neq i$ when conditioning on both $X_i$ and $X_j$. Note that $j \in \mathcal{C}_i$ iff $i \in \mathcal{C}_j$ for the comparison groups of interest. To keep notation simple, the proof is presented for $M=1$.

First, decompose 
\begin{align*}
    \begin{split}
        \mathds{E}\left[\left(\sum_j \omega_{ji T_i} - \frac{T_i}{p(X_i)} - \frac{1-T_i}{1-p(X_i)}\right)^2\right] & = \mathds{E}\left[\mathds{E}\left[\left(\sum_j \omega_{ji 1} - \frac{1}{p(X_i)}\right)^2 \middle\vert T, T_i = 1\right] T_i \right] \\
        & + \mathds{E}\left[\mathds{E}\left[\left(\sum_j \omega_{ji 0} - \frac{1}{1-p(X_i)}\right)^2 \middle\vert T, T_i = 0\right] (1-T_i) \right]
    \end{split}
\end{align*}

The two terms can be handled analogously, so it suffices to consider the case where $i$ is treated. The corresponding term can be decomposed further into
\begin{equation}
    \sum_j \omega_{ji1} - \frac{1}{p(X_i)} = \frac{n (R_4+R_3+R_2)}{n_1} + \frac{n}{n_1} \frac{f(X_i)}{f_1(X_i)} - \frac{1}{p(X_i)}
\end{equation}
where 
\begin{equation}
    R_2 \equaldef f_1^{-1}(X_i) \left[\frac{1}{n} \sum_{j} \frac{\mathds{1}(j \in \mathcal{C}_i) {s_n}^d}{\int g(\Vert y\Vert) \ \mbox{d}y} - f(X_i)\right]
\end{equation}
\begin{equation}
    R_3 \equaldef \frac{1}{n} \sum_{j} (f_1^{-1}(X_j)-f_1^{-1}(X_i)) \ \frac{\mathds{1}(j \in \mathcal{C}_i) {s_n}^d}{\int g(\Vert y\Vert) \ \mbox{d}y}
\end{equation}
\begin{equation}
    R_4 \equaldef \frac{1}{n} \sum_{j} \left[\left(\frac{1}{n_1} \sum_{k: T_k=1} \frac{\mathds{1}(k \in \mathcal{C}_j) {s_n}^d}{\int g(\Vert y\Vert) \ \mbox{d}y} \right)^{-1} - f_1^{-1}(X_j) \right]  \frac{\mathds{1}(j \in \mathcal{C}_i) {s_n}^d}{\int g(\Vert y\Vert) \ \mbox{d}y}
\end{equation}

\noindent From the law of large numbers, 
\begin{equation}
    \mathds{E}\left[\mathds{E}\left[ \left(\frac{n}{n_1} \frac{f(X_i)}{f_1(X_i)} - \frac{1}{p(X_i)}\right)^2 \middle\vert \boldsymbol{T}, T_i=1\right] T_i\right] \rightarrow 0
\end{equation}

\noindent So it remains to show that
\begin{equation}
    \mathds{E}\left[\mathds{E}\left[ \left(\frac{n R_m}{n_1}\right)^2 \middle\vert \boldsymbol{T}, T_i=1\right] T_i\right] \rightarrow 0
\end{equation}
for $m = 2, 3, 4$. 

\subsection*{$R_2$}

Using iterated expectations to condition on $X_i$, the second moment can be decomposed into the bias

\begin{align*}
& \mathds{E}\left[\left\{ \mathds{E}\left[ \frac{\mathds{1}(j\in \mathcal{C}_i) {s_n}^d}{\int g(\Vert y\Vert) \ \mbox{d}y} \middle\vert X_i, T_i=1\right] - f(X_i) \right\}^2 \right]\\
& = \mathds{E}\left[\left\{ \mathds{E}\left[\mathds{E}\left[\frac{\mathds{1}(j\in \mathcal{C}_i) {s_n}^d}{\int g(\Vert y\Vert) \ \mbox{d}y} \middle\vert X_i, X_j, T_i=1\right] \middle\vert X_i, T_i=1\right] - f(X_i) \right\}^2 \right] \\
& = \mathds{E}\left[\left\{ \int \frac{g(s_n \Vert x_j - X_i\Vert)}{{s_n}^{-d} \int g(\Vert y \Vert) \ \mbox{d}y} f(x_j) \ \mbox{d}x_j - f(X_i) \right\}^2 \right] \\
& = \mathds{E}\left[\left\{ \int \frac{g(\Vert z\Vert)}{\int g(\Vert y \Vert) \ \mbox{d}y} (f(X_i+z/s_n) - f(X_i)) \ \mbox{d}z \right\}^2 \right] \\
& = O({s_n}^{-4})
\end{align*}
and the variance
\begin{align*}
& \mathds{E}\left[\frac{1}{n} \mathds{V} \left[ \frac{\mathds{1}(j\in \mathcal{C}_i) {s_n}^d}{\int g(\Vert y\Vert) \ \mbox{d}y} \middle\vert X_i, T_i=1 \right] \right]\\
&\leq \mathds{E}\left[\frac{1}{n} {s_n}^{2d} \mathds{E}\left[ \left(\frac{\mathds{1}(j \in \mathcal{C}_i)}{\int g(\Vert y\Vert) \mbox{d}y}\right)^2 \middle\vert X_i, T_i=1 \right]\right]\\
&\leq \frac{1}{n} {s_n}^{2d} \frac{O(1)}{{s_n}^{d}}\\
&\leq O((n{s_n}^{-d})^{-1})
\end{align*}
which implies 
\begin{equation}
    \mathds{E}\left[\mathds{E}\left[ \left(\frac{n R_2}{n_1}\right)^2 \middle\vert T, T_i=1\right] T_i\right] \rightarrow 0
\end{equation}

\subsection*{$R_3$}

After iterated expectations, the bias equals (the expectation of)
\begin{align*}
& \left\{ \mathds{E}\left[ (f_1^{-1}(X_j)-f_1^{-1}(X_i)) \ \frac{\mathds{1}(j \in \mathcal{C}_i) {s_n}^d}{\int g(\Vert y\Vert) \ \mbox{d}y} \middle\vert X_i, T_i=1, T_j=1\right]\right\}^2 \\
& = \left\{ \int (f_1^{-1}(x_j)-f_1^{-1}(X_i)) \ \frac{g(s_n \Vert x_j - X_i\Vert) {s_n}^d}{\int g(\Vert y\Vert) \ \mbox{d}y} \ f_1(x_j) \mbox{d}x_j \right\}^2 \\
& = \iint (f_1^{-1}(X_i + z/s_n)-f_1^{-1}(X_i)) (f_1^{-1}(X_i + \tilde{z}/s_n)-f_1^{-1}(X_i)) \ \frac{g(\Vert z \Vert) g(\Vert \tilde{z} \Vert)}{(\int g(\Vert y\Vert) \ \mbox{d}y)^2} \\
& \times f_1(X_i + z/s_n) f_1(X_i + \tilde{z}/s_n) \ \mbox{d}z \mbox{d}\tilde{z} \rightarrow 0
\end{align*}
while the variance term corresponds to
\begin{align*}
& \frac{1}{n} \mathds{V}\left[ (f_1^{-1}(X_j)-f_1^{-1}(X_i)) \ \frac{\mathds{1}(j \in \mathcal{C}_i) {s_n}^d}{\int g(\Vert y\Vert) \ \mbox{d}y} \middle\vert X_i, T_i=1, T_j=1\right] \\
& \leq \frac{1}{n {s_n}^{-2d}} \mathds{E}\left[\frac{(f_1^{-1}(X_j)-f_1^{-1}(X_i))^2 \ \mathds{1}(j \in \mathcal{C}_i)}{(\int g(\Vert y\Vert) \ \mbox{d}y)^2}\right] \rightarrow 0
\end{align*}

\subsection*{$R_4$}

$R_4$ can be rewritten as $\frac{1}{n} \sum_{j} f_1^{-2}(X_j) \left[f_1(X_j) - \frac{1}{n_1} \sum_{k\neq j, T_k=1} \frac{\mathds{1}(k \in \mathcal{C}_j) {s_n}^d}{\int g(\Vert y\Vert) \ \mbox{d}y} \right] \frac{\mathds{1}(j \in \mathcal{C}_i) {s_n}^d}{\int g(\Vert y\Vert) \ \mbox{d}y}$,
up to lower-order terms. Then, by Jensen's inequality and iterated expectations,
\begin{align*}
\begin{split}
    & \mathds{E}\left[\left(\frac{1}{n} \sum_{j} f_1^{-2}(X_j) \left[f_1(X_j) - \frac{1}{n_1} \sum_{k\neq j, T_k=1} \frac{\mathds{1}(k \in \mathcal{C}_j) {s_n}^d}{\int g(\Vert y\Vert) \ \mbox{d}y} \right] \frac{\mathds{1}(j \in \mathcal{C}_i) {s_n}^d}{\int g(\Vert y\Vert) \ \mbox{d}y}\right)^2\right] \\ & \leq \mathds{E}\left[\frac{1}{n} \sum_{j} \left(f_1^{-2}(X_j) \left[f_1(X_j) - \frac{1}{n_1} \sum_{k\neq j, T_k=1} \frac{\mathds{1}(k \in \mathcal{C}_j) {s_n}^d}{\int g(\Vert y\Vert) \ \mbox{d}y} \right] \frac{\mathds{1}(j \in \mathcal{C}_i) {s_n}^d}{\int g(\Vert y\Vert) \ \mbox{d}y}\right)^2\right] \\
    & \leq \mathds{E}\left[\left(f_1^{-2}(X_j) \left[f_1(X_j) - \frac{1}{n_1} \sum_{k\neq j, T_k=1} \frac{\mathds{1}(k \in \mathcal{C}_j) {s_n}^d}{\int g(\Vert y\Vert) \ \mbox{d}y} \right] \frac{\mathds{1}(j \in \mathcal{C}_i) {s_n}^d}{\int g(\Vert y\Vert) \ \mbox{d}y}\right)^2\right] \\
    & \leq \mathds{E}\left[\mathds{E}\left[\left(f_1^{-2}(X_j) \left[f_1(X_j) - \frac{1}{n_1} \sum_{k\neq j, T_k=1} \frac{\mathds{1}(k \in \mathcal{C}_j) {s_n}^d}{\int g(\Vert y\Vert) \ \mbox{d}y} \right] \frac{\mathds{1}(j \in \mathcal{C}_i) {s_n}^d}{\int g(\Vert y\Vert) \ \mbox{d}y}\right)^2\middle\vert X_i, X_j\right]\right].
\end{split}
\end{align*}
Decomposing the expectation of the square into squared bias and variance, and proceeding as for $R_2$ and $R_3$ then implies $\mathds{E}\left[\mathds{E}\left[ \left(\frac{n R_4}{n_1}\right)^2 \middle\vert \boldsymbol{T}, T_i=1\right] T_i\right] \rightarrow 0$.
\end{proof}

The rest of the proof follows arguments similar to those in  \citet{lin2025regression}, who establish the asymptotic normality of ATE imputation estimators that use kernel weights. First, letting $\varepsilon_i \equaldef Y_i - \mu_{T_i}(X_i)$, note that the estimator can be written as
\begin{align*}
\begin{split}
    \widehat{\mbox{ATE}} &= \meanin
    \left[ \sumjn T_j \omega_{ij1} \left(Y_j - \hat{\mu}_1(X_j) + \hat{\mu}_1(X_i)\right)
    - \sumjn (1-T_j) \omega_{ij0} \left(Y_j - \hat{\mu}_0(X_j) + \hat{\mu}_0(X_i)\right)
    \right] \\
    &= \meanin
    \Bigg[ \sumjn T_j \omega_{ij1} \left(Y_j - \hat{\mu}_1(X_j) + \hat{\mu}_1(X_i) +\mu_1(X_i)-\mu_1(X_j)-\mu_1(X_i)+\mu_1(X_j) \right)\\
    &- \sumjn (1-T_j) \omega_{ij0} \left(Y_j - \hat{\mu}_0(X_j) + \hat{\mu}_0(X_i) + \mu_0(X_i)-\mu_0(X_j)-\mu_0(X_i)+\mu_0(X_j)\right)
    \Bigg] \\
    & = \meanin \left( \mu_1(X_i)-\mu_0(X_i) \right) \\
    & + \meanin \Bigg[
    \sum_j T_j \omega_{ij1} \left( \varepsilon_j - \mu_1(X_i)+\mu_1(X_j) + \hat{\mu}_1(X_i)-\hat{\mu}_1(X_j) \right) \\
    & - \sumjn (1-T_j) \omega_{ij0}
    \left(\varepsilon_j - \mu_0(X_i)+\mu_0(X_j)+\hat{\mu}_0(X_i)-\hat{\mu}_0(X_j) \right)
    \Bigg] \\
    & = S + \tilde{R},
\end{split}
\end{align*}
where  
\begin{align*}
    S & = \meanin \mu_1(X_i)-\mu_0(X_i) + \meanin \left(\sumjn T_j \omega_{ij1} \varepsilon_j - \sumjn (1-T_j) \omega_{ij0} \varepsilon_j\right) \\
    & = \meanin \mu_1(X_i)-\mu_0(X_i) + \meanin \sumjn (2T_j - 1) \omega_{ijT_j} \varepsilon_j \\
    & = \meanin \mu_1(X_i)-\mu_0(X_i) + \meanin (2T_i-1) \sumjn \omega_{ji T_i} \varepsilon_i
\end{align*}
and
\begin{align*}
    \tilde{R} & = \meanin 
    \Bigg( \sumjn T_j \omega_{ij1} \left(\hat{\mu}_{1}(X_i)-\mu_{1}(X_i)+\mu_{1}(X_j)-\hat{\mu}_{1}(X_j) \right) \\
    & - \sum_j (1-T_j) \omega_{ij0} \left( \hat{\mu}_{0}(X_i)-\mu_{0}(X_i)+\mu_{0}(X_j)-\hat{\mu}_{0}(X_j) \right)\Bigg) \\
    & = \meanin 
    \Bigg( \sumjn T_j \omega_{ijT_j} \left(\hat{\mu}_{T_j}(X_i)-\mu_{T_j}(X_i)+\mu_{T_j}(X_j)-\hat{\mu}_{T_j}(X_j) \right) \\
    & - \sum_j (1-T_j) \omega_{ijT_j} \left( \hat{\mu}_{T_j}(X_i)-\mu_{T_j}(X_i)+\mu_{T_j}(X_j)-\hat{\mu}_{T_j}(X_j) \right)\Bigg) \\
    & = \meanin \sumjn (2 T_j - 1) \omega_{ijT_j} \left(\hat{\mu}_{T_j}(X_i)-\mu_{T_j}(X_i)- (\hat{\mu}_{T_j}(X_j)-\mu_{T_j}(X_j)) \right)
\end{align*}

The first term, $S$, is asymptotically normal while $\tilde{R}$ is $\sqrt{n}$-negligible.

\paragraph{Main term} 

Consider $S$ first: Let $\theta_i = \sumjn \omega_{ji T_i}$ and decompose
\begin{align*}
S = \frac{1}{n} \sum_{i=1}^n (2T_i - 1) \theta_i \varepsilon_i 
&= \frac{1}{n} \sum_{i=1}^n (2T_i - 1) \left( \frac{T_i}{p(X_i)} + \frac{1 - T_i}{1 - p(X_i)} \right) \varepsilon_i \\
& + \frac{1}{n} \sum_{i=1}^n (2T_i - 1) \left(\theta_i - \frac{T_i}{p(X_i)} - \frac{1 - T_i}{1 - p(X_i)} \right) \varepsilon_i.
\end{align*}

Then,
\begin{equation*}
\mathds{E} \left[ \frac{1}{n} \sum_{i=1}^n (2T_i - 1) \left(\theta_i-\frac{T_i}{p(X_i)} - \frac{1 - T_i}{1 - p(X_i)}\right) \epsilon_i \, \middle| \, \boldsymbol{X}, \boldsymbol{T} \right] = 0,
\end{equation*}
\begin{align*}
\mathds{V}\left[\frac{1}{n} \sumin (2T_i - 1) \left(\theta_i-\frac{T_i}{p(X_i)} - \frac{1 - T_i}{1 - p(X_i)}\right) \epsilon_i \, \middle\vert \, \boldsymbol{X}, \boldsymbol{T} \right] = \frac{1}{n^2} \sum_{i=1}^n \left(\theta_i-\frac{T_i}{p(X_i)} - \frac{1 - T_i}{1 - p(X_i)}\right)^2 \sigma^2_{T_i}(X_i),
\end{align*}
and
\begin{equation*}
\frac{1}{n} \sum_{i=1}^n (2T_i - 1) \left(\theta_i-\frac{T_i}{p(X_i)} - \frac{1 - T_i}{1 - p(X_i)}\right) \epsilon_i = o_P(n^{-1/2}).
\end{equation*}
Then,
\begin{align*}
\meanin (2 T_i - 1) \theta_i \varepsilon_i + \meanin \mu_1(X_i) - \mu_0(X_i) - \mbox{ATE} &= \frac{1}{n} \sum_{i=1}^n \left[ \mu_1(X_i) - \mu_0(X_i) - \mbox{ATE} \right] \\
& + \frac{1}{n} \sum_{i=1}^n (2T_i - 1) \left( \frac{T_i}{p(X_i)} + \frac{1 - T_i}{1 - p(X_i)} \right) \epsilon_i \\
& + o_P(n^{-1/2}).
\end{align*}
Hence, by the central limit theorem,
\begin{align}
\begin{split}
    & \sqrt{n} \left(\mathds{E}\left[\frac{\mathds{V}[Y_i(1)\vert X_i]}{p(X_i)}+\frac{\mathds{V}[Y_i(0)\vert X_i]}{1-p(X_i)}\right] +
    \mathds{V}[\mbox{CATE}(X_i)] \right)^{-1/2} \\
    \times & \left( \meanin (2 T_i - 1) \theta_i \varepsilon_i + \meanin \mu_1(X_i) - \mu_0(X_i) - \mbox{ATE} \right) \xrightarrow{d} \mathcal{N}(0, 1)
\end{split}
\end{align}

\paragraph{Remainder}

The remaining term vanishes at a faster rate. \\
Let $\hat{\mu}_{t}^*(X_i) \equaldef \frac{1}{\sumkn \mathds{1}(T_k=t) \mathds{P}[k \in \mathcal{C}_{i} \vert X_i, X_k]} \sumjn (\mathds{1}(T_j=t) \mathds{P}[j \in \mathcal{C}_{i}\vert X_i, X_j] Y_j)$ and note that $\mathds{P}[j \in \mathcal{C}_{it}\vert X_i, X_j]$ inherits smoothness properties from $g$, in particular, it is differentiable $K$ times with uniformly bounded $K$-th derivatives.\footnote{When $M$ grows, the bound scales with $M=o(\ln(n))$, which does not affect the negligibility of the relevant terms.} \medskip

Noting that 
$\hat{\mu}_{T_j}(X_i)-\hat{\mu}_{T_j}(X_j) - (\mu_{T_j}(X_i)-\mu_{T_j}(X_j)) = \hat{\mu}_{T_j}(X_i)-\hat{\mu}_{T_j}(X_j) - (\hat{\mu}_{T_j}^*(X_i)-\hat{\mu}_{T_j}^*(X_j)) + 
(\hat{\mu}_{T_j}^*(X_i)-\hat{\mu}_{T_j}^*(X_j)) - (\mu_{T_j}(X_i)-\mu_{T_j}(X_j))$, the remainder $\tilde{R}$ can be decomposed into $\tilde{R}_1$ and $\tilde{R}_2$. Consider first 
\begin{equation}
\tilde{R}_1 =
\meanin \sumjn (2 T_j - 1) \omega_{ijT_j} \left(\hat{\mu}_{T_j}^*(X_i)-\mu_{T_j}(X_i)- (\hat{\mu}_{T_j}^*(X_j) - \mu_{T_j}(X_j)) \right) 
\end{equation}
We have
\begin{equation}
    \lvert \tilde{R}_1 \rvert \leq
\meanin \sumjn \max_{t \in\{0,1\}} \omega_{ij T_j}
\left\rvert \mu_t(X_i) - \mu_t(X_j) - \hat{\mu}_t^*(X_i) + \hat{\mu}_t^*(X_j)
\right\rvert.
\end{equation}
\noindent For $t \in \{0,1\}$, using a Taylor expansion to the $L$-th order,
\begin{equation}
\left| \mu_t(X_j) - \mu_t(X_i) - \sum_{\ell=1}^{L-1}
\frac{1}{\ell!} \sum_{\alpha \in\Lambda_\ell} \partial^\alpha \mu_t(X_i)(X_j-X_i)^\alpha \right|
\leq \max_{\alpha \in\Lambda_L} \|\partial^\alpha\mu_t\|_\infty \frac{1}{L!} \sum_{\alpha\in\Lambda_L} \|X_j-X_i\|^L,
\end{equation}
where $\Lambda_\ell$ is the set of natural numbers $(\alpha_1, \ldots, \alpha_d)$ that sum up to $\ell$.

\noindent Analogously, 
\begin{equation}
\left|
\hat{\mu}_t^*(X_j) - \hat{\mu}_t^*(X_i)
-\sum_{\ell=1}^{L-1} \frac{1}{\ell!} \sum_{\alpha \in\Lambda_\ell}
\partial^\alpha \hat{\mu}_t^* (X_i)(X_j-X_i)^\alpha \right|
\leq
\max_{\alpha \in\Lambda_L}
\|\partial^\alpha\hat{\mu}_t^*\|_\infty
\frac{1}{L!}\sum_{\alpha\in\Lambda_L} \|X_j-X_i\|^L.
\end{equation}

\noindent Matching derivatives, we get
\begin{equation}
\left|
\sum_{\ell=1}^{L-1} \frac{1}{\ell!} \sum_{\alpha \in\Lambda_\ell}
\left(\partial^\alpha\hat{\mu}_t^*(X_i) - \partial^\alpha\mu_t(X_i) \right) (X_j-X_i)^\alpha
\right|
\leq \sum_{\ell=1}^{L-1} \max_{\alpha \in \Lambda_\ell}
\|\partial^\alpha\hat{\mu}_t^*-\partial^\alpha \mu_t\|_\infty
\frac{1}{\ell!} \sum_{\alpha \in\Lambda_\ell} \lVert X_j-X_i\rVert^\ell .
\end{equation}
As a result, 
\begin{equation}
\begin{split}
\tilde{R}_1 &\leq C  \left(
\max_{t\in\{0,1\}}
\max_{\alpha\in\Lambda_L}
\|\partial^\alpha \mu_t\|_\infty
+
\max_{t\in\{0,1\}} \max_{\alpha\in\Lambda_L}
\|\partial^\alpha\hat{\mu}_t^*\|_\infty
\right)
\left(
\meanin \sumjn \omega_{ij T_j} \lVert X_j-X_i\rVert^L
\right) \\
& + \sum_{\ell=1}^{L-1}
\left(
\max_{t\in\{0,1\}} \max_{\alpha\in\Lambda_\ell}
\|\partial^\alpha\hat{\mu}_t^*-\partial^\alpha\mu_t\|_\infty
\right)
\left( \meanin \sumjn \omega_{ij T_j} \|X_j-X_i\|^\ell \right).
\end{split}
\end{equation}

Proceeding as in the Proof of Theorem 4.1(iii) in \citet{lin2025regression}'s online appendix (with straightforward adaptations as in the proof of Lemma \ref{weights_to_score_lemma}), it follows that 
\begin{equation}
    \mathds{E}\left[\sum_j \omega_{ij T_j} \lVert X_j - X_i\rVert^\ell\right] \leq C \int g(\lVert y\rVert) \lVert y\rVert^\ell \mbox{d}y {s_n}^{-\ell};
\end{equation}
which is $0$ for $\ell = 1$ and $O({s_n}^{-\ell})$ for $\ell = 2, \ldots, L$. Then, noting that the variance dominates the bias, standard results on kernel estimation yields $\max_{\alpha, t} \lVert\partial^\alpha \hat{\mu}_t^* - \partial^\alpha \mu_t||_{\infty} = O_p((\psi_i \lambda_n)^{-M} {s_n}^{2\ell} \ln(n)$, the condition that $n^{1/2} (\psi_i \lambda_n)^{-M} {s_n}^{\ell} \ln(n) \rightarrow 0$ implies $n^{1/2}\tilde{R}_2 \rightarrow 0$. 

\noindent Now, noting that $\sumjn T_j \omega_{ij1} = 1$ and using Lemma \ref{weights_to_score_lemma}, the second piece is given by
\begin{align}
    \begin{split}
        \tilde{R}_2 &=
        \meanin \sumjn (2 T_j - 1) \omega_{ij T_j} \left(
        \hat{\mu}_{T_j}(X_i) - \hat{\mu}_{T_j}(X_j) - \hat{\mu}_{T_j}^*(X_i) + \hat{\mu}_{T_j}^*(X_j)
        \right) \\
        & = \meanin \sumjn \Bigg( T_j \omega_{ij1} (\hat{\mu}_{1}(X_i) - \hat{\mu}_{1}(X_j) - \hat{\mu}_{1}^*(X_i) \\
        & + \hat{\mu}_{1}^*(X_j)) - (1-T_j) \omega_{ij0} (\hat{\mu}_{0}(X_i) - \hat{\mu}_{0}(X_j) - \hat{\mu}_{0}^*(X_i) + \hat{\mu}_{0}^*(X_j)) \Bigg) \\
        & = \meanin \sumjn T_j \omega_{ij1} (\hat{\mu}_1(X_i) - \hat{\mu}_1^*(X_i)) + \meanin \sumjn T_j \omega_{ij1} (\hat{\mu}_1^*(X_j) - \hat{\mu}_1(X_j)) \\
        & - \meanin \sumjn (1-T_j) \omega_{ij0} (\hat{\mu}_0(X_i) - \hat{\mu}_0^*(X_i)) - \meanin \sumjn (1-T_j) \omega_{ij0} (\hat{\mu}_0^*(X_j) - \hat{\mu}_0(X_j)) \\
        & = \meanin (\hat{\mu}_1(X_i) - \hat{\mu}_1^*(X_i)) - \meanin \sumjn T_i \omega_{ji1} (- \hat{\mu}_1^*(X_i) + \hat{\mu}_1(X_i)) \\
        & - \meanin (\hat{\mu}_0(X_i) - \hat{\mu}_0^*(X_i)) + \meanin \sumjn (1-T_i) \omega_{ji0} (-\hat{\mu}_0^*(X_i) + \hat{\mu}_0(X_i)) \\
        & = \meanin \left(1-T_i \sumjn \omega_{ji1}\right) (\hat{\mu}_1(X_i) - \hat{\mu}_1^*(X_i)) \\
        & - \meanin \left(1-(1-T_i) \sumjn \omega_{ji0}\right) (\hat{\mu}_0(X_i) - \hat{\mu}_0^*(X_i)) \\
        & = \meanin \left(1-\frac{T_i}{p(X_i)} \right) (\hat{\mu}_1(X_i) - \hat{\mu}_1^*(X_i)) - \meanin \left(1-\frac{1-T_i}{1-p(X_i)}\right) (\hat{\mu}_0(X_i) - \hat{\mu}_0^*(X_i)) + \tilde{r}_2,
    \end{split}
\end{align}
where $\tilde{r}_2$ is of lower order.

Consider the first term, $\tilde{R}_{21} \equaldef \meanin \left(1-\frac{T_i}{p(X_i)} \right) (\hat{\mu}_1(X_i) - \hat{\mu}_1^*(X_i))$. The second can be treated analogously. $\tilde{R}_{21}$ can be further decomposed into $\tilde{R}_{211}+\tilde{R}_{212}$, where 

\begin{align}
    \tilde{R}_{211} & = \meanin \left(1-\frac{T_i}{p(X_i)} \right) (\hat{\mu}_1(X_i) - \tilde{\mu}_1(X_i))\\
    \tilde{R}_{212} & = \meanin \left(1-\frac{T_i}{p(X_i)} \right) (\tilde{\mu}_1(X_i) - \hat{\mu}_1^*(X_i))
\end{align}
where $\tilde{\mu}_1(X_i) \equaldef \frac{1}{\sumkn T_k \mathds{P}[k \in \mathcal{C}_{i} \vert X_i, X_k]} \sumjn \mathds{1}(j \in \mathcal{C}_{i1}) Y_j$.

The two terms can be handled similarly, after linearizing in the case of $\tilde{R}_{211}$. Consider $\tilde{R}_{212}$. We have 

\begin{equation}
    \mathds{E}[\tilde{R}_{212}] = \mathds{E}[\mathds{E}[\tilde{R}_{212} \vert X]] = 0.
\end{equation}

Noting that, for $i\neq j$,

\begin{align*}
    &\mathds{C}ov[(1-T_i/p(X_i)) (\tilde{\mu}_1(X_i) - \hat{\mu}_1^*(X_i)); (1-T_j/p(X_j)) (\tilde{\mu}_1(X_j) - \hat{\mu}_1^*(X_j)) \vert \boldsymbol{X}] \\
    &= \mathds{E}[(1-T_i/p(X_i)) (\tilde{\mu}_1(X_i) - \hat{\mu}_1^*(X_i)) (1-T_j/p(X_j)) (\tilde{\mu}_1(X_j) - \hat{\mu}_1^*(X_j)) \vert \boldsymbol{X}] \\
    & = \mathds{E}[\mathds{E}[(1-T_i/p(X_i)) (\tilde{\mu}_1(X_i) - \hat{\mu}_1^*(X_i)) (1-T_j/p(X_j)) (\tilde{\mu}_1(X_j) - \hat{\mu}_1^*(X_j)) \vert \boldsymbol{X}, \boldsymbol{Y}(1), \boldsymbol{T}]\vert \boldsymbol{X}] \\
    & = 0
\end{align*}
since $\mathds{E}[(\tilde{\mu}_1(X_j) - \hat{\mu}_1^*(X_j)) \vert \boldsymbol{X}, \boldsymbol{Y}(1), \boldsymbol{T}] = 0$, it follows that the variance obeys
\begin{align}
    \begin{split}
        \mathds{V}[\tilde{R}_{212}] & = \mathds{E}[\mathds{V}[\tilde{R}_{212} \vert X]] + \mathds{V}[\mathds{E}[\tilde{R}_{212}\vert \boldsymbol{X}]] \\
        & = n^{-2} \sumin \mathds{E}[\mathds{V}[(1-T_i/p(X_i)) (\tilde{\mu}_1(X_i) - \hat{\mu}_1^*(X_i))\vert \boldsymbol{X}]] \\
        & = n^{-2} \sumin \mathds{E}[\mathds{E}[(1-T_i/p(X_i))^2 (\tilde{\mu}_1(X_i) - \hat{\mu}_1^*(X_i))^2 \vert \boldsymbol{X}]] \\ 
        & = n^{-2} \sumin \mathds{E}[\mathds{E}[(1-T_i/p(X_i))^2\vert X] \mathds{E}[(\tilde{\mu}_1(X_i) - \hat{\mu}_1^*(X_i))^2\vert \boldsymbol{X}]] \\
        & = n^{-2} \sumin \mathds{E}[(1-p(X_i))p(X_i)^{-1} \mathds{E}[(\tilde{\mu}_1(X_i) - \hat{\mu}_1^*(X_i))^2\vert \boldsymbol{X}]] \\ 
        & \leq C/n \mathds{E}[(\tilde{\mu}_1(X_i) - \hat{\mu}_1^*(X_i))^2] \\
        & = o(1/n).
    \end{split}
\end{align}

As a result, $\sqrt{n} \tilde{R}_2 \overset{p}{\longrightarrow} 0$.




\section*{Appendix B: Further results}

\subsection{Clustering}

\noindent Since the covariate density is continuous (then bounded by compactness), the clustering coefficient is
\begin{align*}
    C & \equaldef \mathds{P}[W_{jk}=1\vert W_{ij}=W_{ik}=1] \\
    & = \mathds{E}[\mathds{P}[W_{jk}=1\vert W_{ij}=W_{ik}=1, X_j, X_k]\vert W_{ij}=W_{ik}=1] \\
    & = \mathds{E}[\mathds{P}[W_{jk}=1\vert X_j, X_k]\vert W_{ij}=W_{ik}=1] \\
    & = \mathds{E}[g(s_n \Vert x_j - x_k \Vert)\vert W_{ij}=W_{ik}=1] \\
    & = \int_{\mathds{R}^{d}} \int_{\mathds{R}^{d}} g(s_n \Vert x_j - x_k \Vert) f_{X_j, X_k\vert W_{ij}=1, W_{ik}=1}(x_j, x_k) \ \mbox{d}x_j \mbox{d}x_k\\ 
    & = \int_{\mathds{R}^{d}} \int_{\mathds{R}^{d}} g(s_n \Vert x_j - x_k \Vert) \frac{\mathds{P}[{W_{ij}=1=W_{ik}\vert X_j=x_j, X_k=x_k}] f(x_k) f(x_j)}{\mathds{P}[W_{ij}=1=W_{ik}]} \ \mbox{d}x_j \mbox{d}x_k \\
    & = \int_{\mathds{R}^{d}} \int_{\mathds{R}^{d}} g(s_n \Vert x_j - x_k \Vert) \frac{\mathds{E}[\mathds{P}[W_{ij}=1=W_{ik}\vert X_i=x_i, X_j=x_j, X_k=x_k]\vert X_j=x_j, X_k=x_k]}{\mathds{E}[\mathds{P}[W_{ij}=1=W_{ik}\vert X_i=x_i, X_j=x_j, X_k=x_k]]} \\ 
    & f(x_k) f(x_j) \ \mbox{d}x_j \mbox{d}x_k \\
    & = \int_{\mathds{R}^{d}} \int_{\mathds{R}^{d}} \frac{g(s_n \Vert x_j - x_k \Vert) \int_{\mathds{R}^{d}} g(s_n \Vert x_i - x_j \Vert)  g(s_n \Vert x_i - x_k \Vert) f(x_i) \ \mbox{d}x_i}{\int_{\mathds{R}^{3d}} g(s_n \Vert x_i - x_k \Vert)  g(s_n \Vert x_i - x_j \Vert) f(x_i) f(x_j) f(x_k) \ \mbox{d}x_i \mbox{d}x_j \mbox{d}x_k}  \ f(x_j) f(x_k) \ \mbox{d}x_j \mbox{d}x_k \\
    & = \frac{{s_n}^{-3d}}{{s_n}^{-3d}} \int_{\mathds{R}^{3d}} \frac{g(\Vert \hat{x}_j - \hat{x}_k \Vert) g(\Vert \hat{x}_i - \hat{x}_j \Vert) g(\Vert \hat{x}_i - \hat{x}_k \Vert) f\left(\frac{\hat{x}_i}{s_n}\right) f\left(\frac{\hat{x}_j}{s_n}\right) f\left(\frac{\hat{x}_k}{s_n}\right)}{\int_{\mathds{R}^{3d}} g(\Vert \hat{x}_i - \hat{x}_k \Vert) g(\Vert \hat{x}_i - \hat{x}_j \Vert) f\left(\frac{\hat{x}_i}{s_n}\right) f\left(\frac{\hat{x}_j}{s_n}\right) f\left(\frac{\hat{x}_k}{s_n}\right) \ d\hat{x}_i d\hat{x}_j d\hat{x}_k} \ d\hat{x}_i d\hat{x}_j d\hat{x}_k \\
    & \rightarrow \frac{\int_{\mathds{R}^{3d}} g(\Vert \hat{x}_j - \hat{x}_k \Vert) g(\Vert \hat{x}_i - \hat{x}_j \Vert) g(\Vert \hat{x}_i - \hat{x}_k \Vert) d\hat{x}_i d\hat{x}_j d\hat{x}_k}{\int_{\mathds{R}^{3d}} g(\Vert \hat{x}_i - \hat{x}_j \Vert) g(\Vert \hat{x}_i - \hat{x}_k \Vert) \ d\hat{x}_i d\hat{x}_j d\hat{x}_k}
\end{align*}
where a hat indicates a change of variable of the form $\hat{x} = s_n (x-0)$, assuming for simplicity that $0$ belongs to the support of $x$.

\subsection*{Dense networks with homophily in the unobservables}\label{section_hiu}

This section considers a link formation model that only assumes homophilic behavior in unobservables and is suitable for dense networks. The method provides insight about how to create sub-groups that are increasingly close in terms of unobservables and can be adapted to deal with empirical concerns such as matching on treatment status. Although dense networks are less frequent, this approach is thus valuable as it covers additional network structures and applications of interest. \medskip

Now, a link exists between $i$ and $j$ if $\eta_{ij} \leq w(h(X_i^o; X_j^o) + \Vert X_i^u - X_j^u\Vert)$. The function $w$ may not depend on $n$ anymore. The function $h$ need not be homophilic nor separable in the observed $X^o$ but is assumed known. Most results would apply with minor modifications if a lower bound with the relevant properties can be obtained. The dimension of $\mathcal{X}^u$ is denoted by $d_u$. \medskip

When observables affect the outcome, they can be controlled for using standard methods.\footnote{For instance, by constructing again products with kernel weights or performing regression adjustments.} For ease of exposition, I focus on presenting how to control for the unobserved components. \medskip

I consider again $\frac{1}{\vert \mathcal{C}_{i1}(\kappa)\vert} \sum_{j \in \mathcal{C}_{i1}(\kappa)} Y_j - \frac{1}{\vert \mathcal{C}_{i0}(\kappa)\vert} \sum_{j \in \mathcal{C}_{i0}(\kappa)} Y_j$, where the comparison group now depends on a truncation parameter $\kappa$. It will play a key role by placing a lower bound on $h$, inducing a closer distribution of unobservables among friends. \medskip

The main idea is that if there is no observed rationale for two people being friends, it becomes more likely that there is an unobserved reason for their friendship. Then, people who are friends despite a high value of $h$ are less likely to differ strongly on unobservables. To see this, consider the simplified case where people reject friendship with anyone whose quality of match does not meet a certain threshold (\textit{i.e.}, $w(x)=0$ for any $x$ large enough). Then, two friends with an $h$ close to the boundary must have close unobservables since a high discrepancy in unobservables would have brought $h+\Vert X_i^u-X_j^u\Vert$ above the threshold. \medskip

This suggests using comparison groups of the form $\mathcal{C}_i(\kappa) \equaldef \{j \in \mathcal{N}(i)\vert h(x_i^o, x_j^o) > \kappa\}$. The estimator then truncates the sums to select individuals whose observed characteristics make them unlikely to be friends. $\kappa$ is viewed as a sequence converging to $\infty$ at a rate to be determined. Using this strategy, a counterpart to the main theorems can be established with $\kappa$-truncation replacing asymptotic homophily.

\begin{theorem}\label{CATE_unobs_CAN}
Suppose $\CATE(x)$ is Hölder continuous with exponent $\alpha$ on a neighborhood of $x_i$, $w$ has bounded support, and there exist a sequence $\lambda_n \rightarrow \infty$ and a sequence $b_n$ such that $\kappa$-truncation satisfies $n w(\kappa+b_n) b_n^{d_u+1} \geq C \lambda_n$ and a sequence $\varepsilon_n \downarrow 0$ satisfying $\kappa+\varepsilon_n > \sup\{\mbox{supp}\{w\}\}$ eventually. Then, the estimator satisfies
\begin{equation*}
    \sqrt{\vert \mathcal{C}_i\vert} (\widehat{\CATE}(x_i; \mathcal{C}_i(\kappa))-\CATE(x_i)-\mathds{B}_i) \overset{d}{\rightarrow} \mathcal{N}\left(0; V\right)
\end{equation*}
and the bias is negligible if $\sqrt{\lambda_n} {\varepsilon_n}^\alpha \rightarrow 0$.
\end{theorem}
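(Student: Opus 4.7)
The plan is to follow the asymptotic-normality template of Theorem~\ref{CATE_asymptotic_normality}, replacing the role of increasing selectivity ($s_n\to\infty$) with that of the truncation parameter $\kappa$ approaching $S\equaldef \sup\mathrm{supp}\{w\}$. Four ingredients drive the argument: (a) $\kappa$-truncation forces unobservables to be close on $\mathcal{C}_i(\kappa)$; (b) Hölder continuity converts this closeness into a bias bound; (c) the comparison-group size grows at rate $\lambda_n$; (d) a Lyapunov CLT finishes the job once one conditions on the realized network to restore independence across summands.

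Step~(a) is a deterministic geometric observation: since $W_{ij}=1$ forces $h(X_i^o,X_j^o)+\|X_i^u-X_j^u\|\in\mathrm{supp}\{w\}\subseteq[0,S]$ and membership in $\mathcal{C}_i(\kappa)$ additionally demands $h(X_i^o,X_j^o)>\kappa$, every $j\in\mathcal{C}_i(\kappa)$ satisfies $\|X_j^u-x_i^u\|<S-\kappa<\varepsilon_n$ eventually, almost surely. For step~(b), note that the link indicator and the truncation event are built from $(X_i,X_j,\eta_{ij})$ alone and hence remain independent of potential outcomes conditional on $X_j$; combined with Assumption~2.2, this lets the bias rewrite as $\mathds{B}_i=\mathds{E}[\CATE(X_j)-\CATE(x_i)\mid j\in\mathcal{C}_i(\kappa)]$. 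Hölder continuity applied via step~(a) then gives $|\mathds{B}_i|=O(\varepsilon_n^\alpha)$, which paired with $|\mathcal{C}_i(\kappa)|=\Theta_P(\lambda_n)$ from step~(c) yields bias negligibility under $\sqrt{\lambda_n}\,\varepsilon_n^\alpha\to 0$.

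For step~(c), write $\mathds{E}[|\mathcal{C}_i(\kappa)|\mid X_i]=(n-1)\mathds{E}[w(h(X_i^o,X_j^o)+\|X_i^u-X_j^u\|)\mathds{1}\{h(X_i^o,X_j^o)>\kappa\}\mid X_i]$. On the set where this integrand is nonzero, $h\in(\kappa,\kappa+b_n]$ yields a one-dimensional slack of width $O(b_n)$, while $\|X_i^u-X_j^u\|<b_n$ contributes $O(b_n^{d_u})$ volume, and $w$ is bounded above by $w(\kappa+b_n)$ on this region. Under Assumption~2.1, a change of variables delivers the $O(n\,w(\kappa+b_n)\,b_n^{d_u+1})=O(\lambda_n)$ rate, and a parallel second-moment computation supports a Chebyshev argument for concentration. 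Step~(d) conditions on the realized network: the recentered sum splits into two averages of independent residuals $Y_j(t)-\mathds{E}[Y_j(t)\mid X_j]$ reweighted by $\mathds{1}\{T_j=t\}/\mathds{P}[T_j=t\mid X_j]$, whose conditional variances converge to $\mathds{V}[Y_j(t)\mid X_j=x_i]$ and propensity scores to $p(x_i)$ because $X_j\to x_i$ on $\mathcal{C}_i(\kappa)$. A Lyapunov CLT (invoked with a standard truncation to handle the tail of $Y$) delivers conditional normality with limiting variance $V$, which lifts to unconditional normality by Slutsky.

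The main obstacle is step~(c): sharply pinning down the order of $|\mathcal{C}_i(\kappa)|$ --- both upper \emph{and} lower bound --- requires a careful change of variables that disentangles the $h$-direction from the unobservable ball and justifies the exponent $d_u+1$, and it is also where the density bound of Assumption~2.1 is needed to rule out degenerate slices when $h$ is nearly saturated at $S$. Once the rate is locked in, the remaining three steps follow on well-trodden paths analogous to Theorems~\ref{CATE_consistency}--\ref{CATE_asymptotic_normality}, with the only substantive change being that homophily in observables is replaced throughout by the deterministic truncation bound from step~(a).
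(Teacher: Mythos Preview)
Your proposal follows the paper's approach closely: decompose the centered estimator into a sample-fluctuation term and a bias term, show the comparison group grows via a volume bound on $\mathds{P}[j\in\mathcal{C}_i(\kappa)]$, and control the bias by exploiting that truncation forces unobservables close. Your step~(a) is in fact a crisper packaging of the paper's $\varepsilon$-ball split: the paper shows the integral over the complement of $B_{\varepsilon_n}(x_i^u)$ contributes at most $\frac{Cn}{\lambda_n}w(\kappa+\varepsilon_n)$, which vanishes once $\kappa+\varepsilon_n>S$, and this is equivalent to your deterministic observation that $\|X_j^u-x_i^u\|<S-\kappa<\varepsilon_n$ on $\mathcal{C}_i(\kappa)$.

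Two small slips to fix. In step~(b), the identity $\mathds{B}_i=\mathds{E}[\CATE(X_j)-\CATE(x_i)\mid j\in\mathcal{C}_i(\kappa)]$ is not correct as written: the conditional laws of $X_j$ given $j\in\mathcal{C}_{i1}$ versus $j\in\mathcal{C}_{i0}$ differ through the propensity score, so the two arms do not recombine into a single CATE-expectation. The paper (as in its proof of Theorem~\ref{CATE_consistency}) bounds each arm $|\mathds{E}[\mu_t(X_j)\mid \mathcal{C}_{it}]-\mu_t(x_i)|$ separately by $C\varepsilon_n^\alpha$; you should do the same, which still yields $|\mathds{B}_i|=O(\varepsilon_n^\alpha)$ via your step~(a). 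In step~(c), the direction of the $w$-bound is inverted: for the \emph{lower} bound on $\mathds{P}[j\in\mathcal{C}_i(\kappa)]$ the paper restricts to the sub-region $\{\kappa<h_{ij}<\kappa+b_n/2,\ \|X_i^u-X_j^u\|<b_n/2\}$ and uses that $w$ is bounded \emph{below} by $w(\kappa+b_n)$ there (since $w$ is decreasing and the argument stays under $\kappa+b_n$); your phrase ``bounded above by $w(\kappa+b_n)$'' goes the wrong way for that purpose. With these corrections the argument matches the paper's.
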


\begin{proof}

To ease notation, the argument $\kappa$ is omitted in all instances of $\mathcal{C}_{it}(\kappa)$. I also make use of the following shorthands: $\Delta_{ij}^u \equaldef \Vert x_j^u -x_i^u\Vert$, $h_{ij} \equaldef h(X_j^o, X_i^o)$. Decompose the centered mean of the group as
\begin{equation*}
    \frac{1}{\vert \mathcal{C}_{it}\vert} \sum_{j \in \mathcal{C}_{it}} Y_j(T_j) - \mathds{E}[Y(t)\vert x_i] = \frac{1}{\vert \mathcal{C}_{it} \vert} \sum_{j \in \mathcal{C}_{it}} Y_j(T_j) - \mathds{E}[Y_j(t)\vert \mathcal{C}_{it}] + \mathds{E}[Y_j(t)\vert \mathcal{C}_{it}] - \mathds{E}[Y(t)\vert x_i]
\end{equation*}
for $t \in \{0, 1\}$.

The first term depends on sample fluctuations and converges (in probability to $0$ and in distribution once re-scaled) provided the number of observations in the sum grows to infinity. The second term is a bias term and disappears under regularity conditions and the truncation $h_{ij} > \kappa$ with $\kappa \rightarrow \infty$. Thus, the main steps prove that (i) the number of observations in the sum grows to infinity and (ii) the bias disappears, for some sequence $\kappa \rightarrow \infty$. Consider the probability of an observation belonging to $\mathcal{C}_{i}$ first: 
\begin{align*}
    \begin{split}
    \mathds{P}[\mathcal{C}_{i}] & \geq C \int_{h_{ij} > \kappa} w_{ij} f(x_j) \ \mbox{d}x_j \\
    & \geq C \int_{\Delta_{ij}^u \leq \frac{b_n}{2}, \kappa < h_{ij} < \kappa+\frac{b_n}{2}} w(\kappa+b_n) f(x_j) \ \mbox{d}x_j \\
    & = C w(\kappa+b_n) \mathds{P}\left[\Delta_{ij}^u \leq \frac{b_n}{2}, \kappa < h_{ij} < \kappa+\frac{b_n}{2}\right] \\
    & = C w(\kappa+b_n) \mathds{P}\left[\Delta_{ij}^u \leq \frac{b_n}{2}\right] \mathds{P}\left[\kappa < h_{ij} < \kappa+\frac{b_n}{2} \middle\vert \Delta_{ij}^u \leq \frac{b_n}{2}\right] \\
    & \geq C w(\kappa+b_n) {b_n}^{d_u+1}
    \end{split}
\end{align*}

Following previous proofs, it suffices to let $\kappa \rightarrow \infty$ slowly enough to induce a rate of $\frac{\lambda_n}{n}$ for the above probability. Next, the bias term reads
\begin{align*}
\begin{split}
    \left\vert \mathds{E}[Y_j(T_j)\vert \mathcal{C}_{it}] - \mathds{E}[Y_j(t)\vert x_i]\right\vert & = \left\vert  \mathds{E}[\mathds{E}[Y_j(T_j)\vert X_j]\vert \mathcal{C}_{it}] - \mathds{E}[Y_j(t)\vert x_i]\right\vert  \\
    & = \left\vert \int_{\mathds{R}^d} (\mathds{E}[Y_j(t)\vert x_j] - \mathds{E}[Y_j(t)\vert x_i]) f_{X_j\vert \mathcal{C}_{it}}(x_j) \ \mbox{d}x_j \right\vert \\
    & \leq \left\vert \int_{B_\varepsilon(x_i)} (\mathds{E}[Y_j(t)\vert x_j] - \mathds{E}[Y_j(t)\vert x_i]) f_{X_j\vert \mathcal{C}_{it}}(x_j) \ \mbox{d}x_j \right\vert \\
    & + \left\vert \int_{B_\varepsilon^c(x_i)} (\mathds{E}[Y_j(t)\vert x_j] - \mathds{E}[Y_j(t)\vert x_i]) f_{X_j\vert \mathcal{C}_{it}}(x_j) \ \mbox{d}x_j \right\vert \\
    & \leq C \varepsilon^\alpha \\
    & + \frac{1}{\mathds{P}[\mathcal{C}_{it}]} \int_{B_\varepsilon^c(x_i)} (\mathds{E}[Y_j(t)\vert x_j] - \mathds{E}[Y_j(t)\vert x_i]) \mathds{P}[\mathcal{C}_{it}\vert x_j] f_{X_j, T_j}(x_j, t) \ \mbox{d}x_j \\
    & \leq C \varepsilon^\alpha \\
    & + \frac{C}{\mathds{P}[\mathcal{C}_{it}]} \int_{B_\varepsilon^c(x_i), h_{ij} > \kappa} (\mathds{E}[Y_j(t)\vert x_j] - \mathds{E}[Y_j(t)\vert x_i]) w_{ij} f_{X_j, T_j}(x_j, t) \ \mbox{d}x_j \\
    & \leq C \varepsilon^\alpha + \frac{C n}{\lambda_n} 
    w(\kappa+ \varepsilon) \left(\mathds{E}[\vert \mathds{E}[Y_j(t)\vert X_j]\vert \vert T_j=t] + \vert \mathds{E}[Y_j(t)\vert x_i]\right)
\end{split}
\end{align*}

\noindent so that, with $\varepsilon \downarrow 0$, the bias disappears as $\kappa$ rises provided $w(\kappa + \varepsilon) = o(\frac{\lambda_n}{n})$.

\end{proof}

The condition on $w(\kappa+b_n) b_n^{d_u+1}$ restricts the speed at which $\kappa$ can increase so that the number of observations used in estimating the CATE keeps growing. The first part pertains to the behavior of the $w$ function; the second term pertains to the space in which unobservables live. \medskip

The term $w(\kappa+b_n)$ comes from the increasing cost of truncating as potential connections are accepted at decreasing rates. In the presence of a discontinuity at the end of the support, \textit{i.e.}, $w(x) = a \mathds{1}_{x \leq D}$ for $a \in ]0; 1]$ and $D\in \mathds{R}^+$, this term disappears. The second term, ${b_n}^{d_u+1}$, is the result of forcing unobservables in a $b_n$-ball using values of $h$ lying between $\kappa$ and $\kappa+b_n$. \medskip

A natural estimator of the end of support, if unknown, is the highest value of $h$ among all $i, j$ satisfying $W_{ij}=1$. Unbounded support of $w$ can be accommodated, provided the function vanishes sufficiently quickly (the threshold is exponential: functions decaying faster than $w=e^{-x}$ provide a sufficiently fast decay). In this case, a factor of $\rho_h(\kappa + b_n/2)$, where $\rho_h$ is (a lower bound on) the tail decay of $h$ conditional on $X_j^u \in B_{r}(x_i^u)$ for some $r$, has to be added. The reason is the tail decay of the density while one seeks increasingly larger values of $h$. \medskip

The conditions on $\varepsilon_n$ ensure that the bias disappears sufficiently fast to enable standard inference. A more primitive statement is $w(\kappa + \varepsilon_n) = o(\lambda_n/n)$, which mirrors conditions such as those in Assumption \ref{Hölder}, part b); this boils down to $\kappa + \varepsilon_n$ eventually crossing the end of the support of $w$ when it is finite. \medskip

Finally, alternative comparison groups can again be considered, by including friends of friends or people with sufficiently many common friends, or by constructing the truncated group differently. For instance, considering triangles of friends can give more leeway to vary $h$. \medskip 

Overall, these results show that suitably refining a comparison group such as friends using the observed covariates allows one to isolate increasingly good matches in terms of unobservables. Although the levels of the unobserved variables are not identified, groups with increasingly similar values can be recovered. The rates are slower than those arising from asymptotically homophilic networks, though it should be noted that they focus directly on the unobserved components while the observed variables can be adjusted in more traditional ways, \textit{e.g.}, regression adjustments. This is sufficient to recover consistent estimators of treatment effects under unobserved confounders with dense networks and to learn who is comparable to whom in terms of unobservables. 


\section*{Appendix C: Simulation Results}

\begin{table}[!ht]
\centering
\caption{RMSE of ATE estimators (n=500)}
\begin{tabular}{|l|l|l|l|l|l|l|l|l|l|l|l|}\hline
y & $\beta_3$ & M=1  & M=2  & M=3  & M=4 & c=2  & c=3 & c=4 & OLS  & Strat & IPW \\ \hline
\multirow{3}{*}{A} & 0 & 0.24 & 0.25 & 0.24 & 0.24 & 0.23 & 0.23 & 0.38 & 0.14 & 0.21 & 0.43 \\
& 0.5 & 0.23 & 0.23 & 0.22 & 0.24 & 0.22 & 0.26 & 0.43 & 0.15 & 0.23 & 0.44 \\
& 1 & 0.24 & 0.23 & 0.23 & 0.24 & 0.22 & 0.26 & 0.43 & 0.21 & 0.29 & 0.36 \\
\multirow{3}{*}{B} & 0 & 0.25 & 0.26 & 0.25 & 0.26 & 0.24 & 0.22 & 0.38 & 0.39 & 0.47 & 0.48 \\
& 0.5 & 0.24 & 0.25 & 0.25 & 0.28 & 0.23 & 0.23 & 0.40 & 0.47 & 0.54 & 0.55 \\
& 1 & 0.26 & 0.26 & 0.26 & 0.28 & 0.24 & 0.25 & 0.41 & 0.48 & 0.57 & 0.52 \\
\multirow{3}{*}{C} & 0 & 0.26 & 0.29 & 0.28 & 0.30 & 0.25 & 0.21 & 0.37 & 0.68 & 0.76 & 0.70 \\
& 0.5 & 0.25 & 0.28 & 0.29 & 0.33 & 0.25 & 0.23 & 0.39 & 0.83 & 0.90 & 0.85 \\
& 1 & 0.28 & 0.30 & 0.30 & 0.34 & 0.26 & 0.24 & 0.41 & 0.83 & 0.91 & 0.84 \\ \hline
\end{tabular}
\end{table}

\begin{table}[!ht]
\centering
\caption{RMSE of ATE estimators (n=2000)}
\begin{tabular}{|l|l|l|l|l|l|l|l|l|l|l|l|}\hline
y & $\beta_3$ & M=1  & M=2  & M=3 & M=4 & c=2 & c=3 & c=4 & OLS  & Strat & IPW  \\ \hline
\multirow{3}{*}{A} & 0 & 0.12 & 0.14 & 0.13 & 0.12 & 0.11 & 0.16 & 0.32 & 0.07 & 0.12 & 0.20 \\
& 0.5 & 0.13 & 0.13 & 0.12 & 0.12 & 0.11 & 0.19 & 0.36 & 0.08 & 0.13 & 0.22 \\
& 1 & 0.14 & 0.13 & 0.12 & 0.12 & 0.11 & 0.20 & 0.36 & 0.11 & 0.15 & 0.18 \\
\multirow{3}{*}{B} & 0 & 0.12 & 0.18 & 0.15 & 0.15 & 0.14 & 0.14 & 0.30 & 0.37 & 0.44 & 0.39 \\
& 0.5 & 0.12 & 0.16 & 0.15 & 0.15 & 0.12 & 0.16 & 0.33 & 0.45 & 0.51 & 0.47 \\
& 1 & 0.13 & 0.17 & 0.16 & 0.16 & 0.13 & 0.17 & 0.34 & 0.46 & 0.52 & 0.47 \\
\multirow{3}{*}{C} & 0 & 0.13 & 0.19 & 0.17 & 0.17 & 0.15 & 0.13 & 0.30 & 0.67 & 0.75 & 0.68 \\
& 0.5 & 0.12 & 0.20 & 0.18 & 0.19 & 0.15 & 0.15 & 0.33 & 0.81 & 0.88 & 0.83 \\
& 1 & 0.13 & 0.20 & 0.19 & 0.19 & 0.15 & 0.16 & 0.33 & 0.82 & 0.89 & 0.83 \\ \hline
\end{tabular}
\caption*{RMSE of the estimator using friends up to order $M$ (columns 1-4), of the estimator using people with at least $c$ friends in common (columns 5-7), of OLS, from stratification based on propensity scores, and of the inverse-propensity weighted estimator. $y$ refers to the type of outcome equation (A: homogeneous effects with linear specification, B: heterogeneous effects, C: heterogeneous effects and quadratic specification).}
\end{table}

\pagebreak

\begin{table}[!ht]
\centering
\caption{RMSE of ATE estimators (over-controlling; n=500)}
\begin{tabular}{|l|l|l|l|l|l|l|l|l|}\hline
y & $\beta_3$ & M=1  & M=2  & M=3 & M=4 & c=2 & c=3 & c=4 \\ \hline
\multirow{3}{*}{A} & 0 & 0.36 & 0.38 & 0.37 & 0.38 & 0.35 & 0.19 & 0.28 \\
& 0.5 & 0.30 & 0.32 & 0.32 & 0.32 & 0.29 & 0.19 & 0.32 \\
& 1 & 0.30 & 0.31 & 0.32 & 0.33 & 0.29 & 0.20 & 0.34 \\
\multirow{3}{*}{B} & 0 & 0.38 & 0.38 & 0.38 & 0.39 & 0.36 & 0.19 & 0.28\\
& 0.5 & 0.33 & 0.34 & 0.34 & 0.36 & 0.31 & 0.18 & 0.32\\
& 1 & 0.35 & 0.35 & 0.34 & 0.37 & 0.33 & 0.20 & 0.31\\
\multirow{3}{*}{C} & 0 & 0.38 & 0.38 & 0.37 & 0.40 & 0.36 & 0.19 & 0.28\\
& 0.5 & 0.34 & 0.34 & 0.35 & 0.39 & 0.32 & 0.18 & 0.32\\
& 1 & 0.35 & 0.36 & 0.36 & 0.39 & 0.34 & 0.21 & 0.34\\ \hline
\end{tabular}
\end{table}
\begin{table}[!ht]
\centering
\caption{RMSE of ATE estimators (over-controlling; n=2000)}
\begin{tabular}{|l|l|l|l|l|l|l|l|l|}\hline
y & $\beta_3$ & M=1  & M=2  & M=3  & M=4 & c=2  & c=3 & c=4\\ \hline
\multirow{3}{*}{A} & 0 & 0.22 & 0.24 & 0.21 & 0.21 & 0.21 & 0.10 & 0.20\\
& 0.5 & 0.17 & 0.19 & 0.18 & 0.17 & 0.17 & 0.11 & 0.25\\
& 1 & 0.18 & 0.20 & 0.18 & 0.17 & 0.18 & 0.11 & 0.24\\
\multirow{3}{*}{B} & 0 & 0.24 & 0.24 & 0.21 & 0.21 & 0.23 & 0.10 & 0.20\\
& 0.5 & 0.20 & 0.21 & 0.19 & 0.19 & 0.19 & 0.10 & 0.24\\
& 1 & 0.20 & 0.21 & 0.19 & 0.19 & 0.19 & 0.10 & 0.24\\
\multirow{3}{*}{C} & 0 & 0.23 & 0.25 & 0.22 & 0.21 & 0.23 & 0.09 & 0.20\\
& 0.5 & 0.22 & 0.23 & 0.21 & 0.21 & 0.21 & 0.10 & 0.24\\
& 1 & 0.22 & 0.24 & 0.21 & 0.21 & 0.21 & 0.11 & 0.24 \\ \hline
\end{tabular}
\caption*{RMSE of the estimator using friends up to order $M$ (columns 1-4), of the estimator using people with at least $c$ friends in common (columns 5-7) in the over-controlling case. $y$ refers to the type of outcome equation (A: homogeneous effects with linear specification, B: heterogeneous effects, C: heterogeneous effects and quadratic specification).}
\end{table}

\end{document}